\documentclass[letterpaper,english]{llncs}
\usepackage[T1]{fontenc}
\usepackage[latin9]{inputenc}
\usepackage{verbatim}
\usepackage{float}
\usepackage{url}
\usepackage{amstext}
\usepackage{graphicx}
\usepackage{enumerate}

\makeatletter

\pdfpageheight\paperheight
\pdfpagewidth\paperwidth

\floatstyle{ruled}
\newfloat{algorithm}{tbp}{loa}
\floatname{algorithm}{Algorithm}

\usepackage{hyperref}
\hypersetup{
 colorlinks=true,linkcolor=blue,anchorcolor=blue,citecolor=blue,filecolor=blue,urlcolor=blue,bookmarksnumbered=true,pdfview=FitB}
\usepackage{breakurl}

\usepackage{times} 
\usepackage[leftcaption]{sidecap}
\usepackage{wrapfig} 

\@ifundefined{showcaptionsetup}{}{%
 \PassOptionsToPackage{caption=false}{subfig}}\usepackage{subfig}
\makeatother

\usepackage{babel}

\newcommand{\Leaves}{\mathcal{L}}
\newcommand{\Info}{\mathcal{I}}

\begin{document}

\global\long\def\searchTree{\textrm{SearchTree}}
\global\long\def\mulT{\mbox{MUL-tree}}
\global\long\def\mT{M\mbox{-tree}}

\title{Extracting Conflict-free Information from Multi-labeled Trees\thanks{This work was supported in part by the National Science Foundation under grant DEB-0829674.}}

\author{Akshay Deepak\inst{1} 
\and David
Fern\'andez-Baca\inst{1} \and Michelle M. McMahon\inst{2}}

\institute{Department of Computer Science, Iowa State University, Ames, IA 50011, USA \and
School of Plant Sciences, University of Arizona, Tucson, AZ 85721, USA}
\maketitle

\begin{abstract}
A multi-labeled tree, or $\mulT$, is a phylogenetic tree where two or more leaves share a label, e.g., a species name.  A \mulT\ can imply multiple conflicting phylogenetic relationships for the same set of taxa, but can also contain conflict-free information that is of interest and yet is not obvious. We define the information content of a \mulT\ $T$ as the set of all conflict-free quartet topologies implied by $T$, and define the maximal reduced form of $T$ as the smallest tree that can be obtained from $T$ by pruning leaves and contracting edges while retaining the same information content.  We show
that any two $\mulT$s with the same information content
exhibit the same reduced form. This introduces an equivalence relation
in $\mulT$s with potential applications to comparing $\mulT$s. 
We present an efficient algorithm to reduce a \mulT\ to its maximally reduced form and evaluate its performance on empirical datasets in terms of both quality of the reduced tree and the degree of data reduction achieved.  %
\end{abstract}

\section{Introduction}

Multi-labeled trees, also known as $\mulT$s, are phylogenetic trees that can have
more than one leaf with the same label \cite{Fellows2003192,Grundt2004695,huber2006phylogenetic,Popp2001474,scornavacca2010building} (Fig.~\ref{Flo:Mul-treeEg}).  $\mulT$s arise naturally and frequently in data sets containing multiple
genes or gene sequences for the same species~\cite{sanderson2008phylota}, but they can also arise in
bio-geographical studies or co-speciation studies where leaves represent individual taxa yet are labeled with their areas \cite{ganapathy2006pattern} or hosts \cite{Johnson_et_al_2003}.

MUL-trees, unlike singly-labeled trees, can contain conflicting species-level phylogenetic information due, e.g., to whole genome duplications~\cite{lott2009inferring}, incomplete lineage sorting~\cite{RasmussenKellis2012}, inferential error, or, frequently, an unknown combination of several factors. However, they can also contain substantial amounts of conflict-free information.
Here we provide a way to extract this information; specifically, we have the following results.
\begin{itemize}
\item
We introduce a new quartet-based measure of the information content of a \mulT, defined as the set of conflict-free quartets the tree displays (Section~\ref{sec:Preliminaries-and-Definitions}).  
\item
We introduce the concept of the maximally-reduced form (MRF) of a \mulT, the smallest \mulT\ with the same information content (Section~\ref{sec:MRF}), and
show that any two $\mulT$s with the same information content have the same MRF (Theorem~\ref{thm:10}).  
\item
We present a simple algorithm to construct the MRF of a $\mulT$ (Section~\ref{sec:Algorithm}); its running time is quadratic in the number of leaves and does not depend on the multiplicity of the leaf labels or the degrees of the internal nodes.  
\item 
We present computational experience with an implementation of our MRF algorithm (Section~\ref{sec:Application}).   In our test data, the MRF is often significantly smaller than the original tree, while retaining most of the taxa.  
\end{itemize}

We now give the intuition behind our notion of  information content, deferring the formal definitions of this and other concepts to the next section.
Quartets (i.e., sets of four species) are a natural starting point, since they are the smallest subsets from which we can draw meaningful topological information.  A singly-labeled tree implies exactly one topology on any quartet.  More precisely, each edge $e$ in a singly-labeled tree implies a bipartition $(A,B)$ of the leaf set, where each part is the set of leaves on one the two sides of $e$.  From $(A,B)$, we derive a collection of bipartitions $ab|cd$ of quartets, such that $\{a,b\} \subseteq A$ and $\{c,d\} \subseteq B$.  Clearly, if one edge in a singly-labeled tree implies some bipartition $q= ab|cd$ of $\{a,b,c,d\}$, then there can be no other edge that implies a bipartition, such as $ac|bd$, that is in conflict with $q$.    Indeed, the quartet topologies implied by a singly-labeled tree uniquely identify it \cite{steel1992complexity}.

\begin{wrapfigure}{r}{0.44\textwidth}
\vspace{-30pt}
\begin{center}
\includegraphics[scale=0.55]{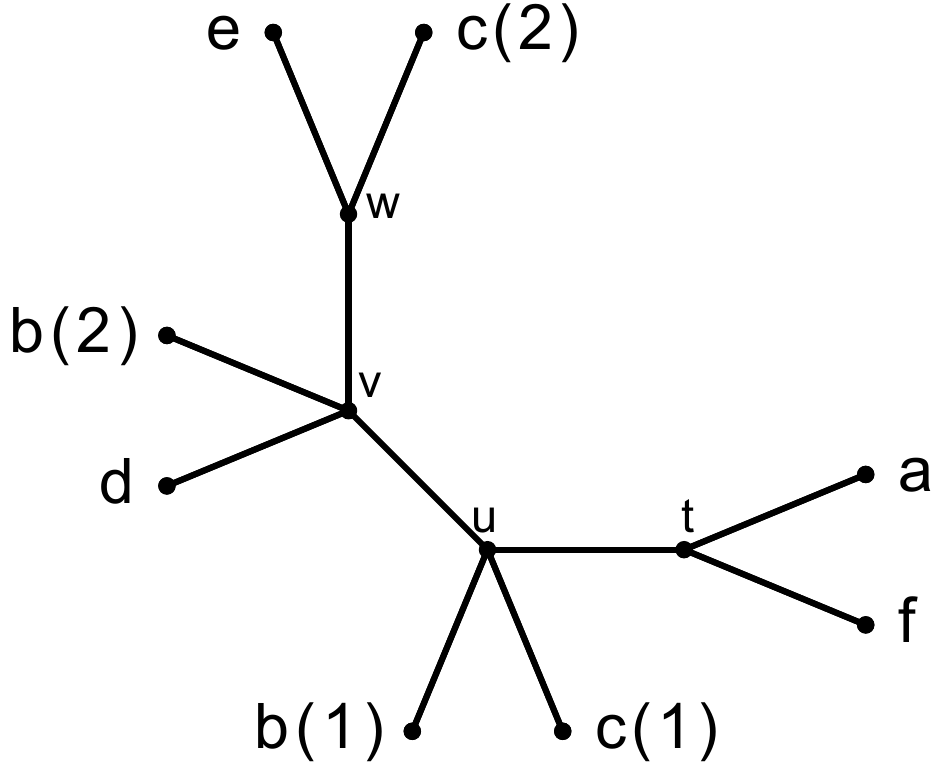}
\end{center}
\vspace{-20pt}
\caption{A $\mulT$. 
Numbers in  parenthesis next to labels indicate the multiplicity of the respective labels
and are not part of the labels themselves.}
\vspace{-20pt}
\label{Flo:Mul-treeEg}
\end{wrapfigure}

The situation for $\mulT$s is more complicated, as illustrated in Fig.~\ref{Flo:Mul-treeEg}.  
Here, the presence of two copies of labels  $b$ and $c$ --- $b(1)$ and $b(2)$, and $c(1)$ and $c(2)$ --- leads to two conflicting topologies on the quartet $\{b,c,d,e\}$. Edge $(u,v)$, implies the bipartition $bc|de$, corresponding to the labels $\{b(1),c(1),d,e\}$, while edge $(v,w)$ implies $bd|ce$ corresponding to the leaves $\{b(2),c(2),d,e\}$.  On the other hand, the quartet topology $af|bc$, implied by edge $(t,u)$, has no conflict with any other topology that the tree exhibits on $\{a,b,c,f\}$.  We show that the set of all such conflict-free quartet topologies is compatible (Theorem~\ref{thm:singly_labeled}).  That is, for every $\mulT$ $T$ there exists at least one singly-labeled tree that displays all the conflict-free quartets of $T$ --- and possibly some other quartets as well.  
Motivated by this, we only view  conflict-free quartet topologies as informative, and define the information content of a $\mulT$ as the set of all conflict-free quartet topologies it implies. 

Conflicting quartets may well provide information, whether about paralogy, deep coalescence, or mistaken annotations. In some cases, species-level phylogenetic information can be recovered from conflicted quartets through application of, e.g., gene-tree species-tree reconciliation, an NP-hard problem. However, this is not feasible when the underlying cause of multiplicity is unknown and when conducting large-scale analyses. Our definition of information content specifically allows us to remain agnostic with respect to the cause and conservative with respect to species relationships, i.e., it does not introduce quartets not originally supported by the data.


\begin{wrapfigure}{r}{0.36\textwidth}
\begin{center}
\includegraphics[scale=0.55]{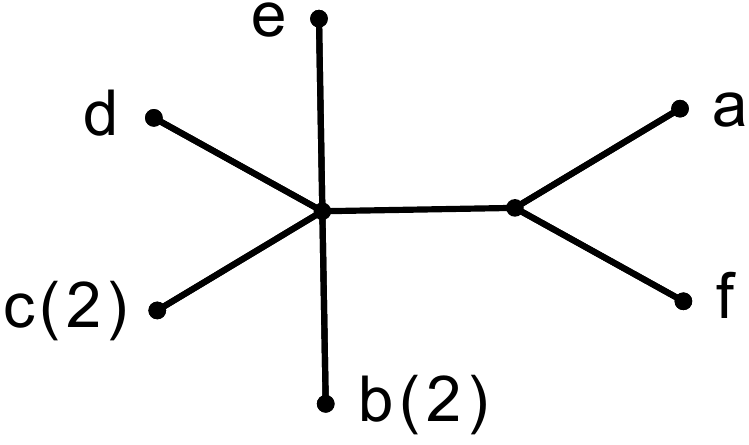}
\end{center}
\vspace{-20pt}
\caption{The MRF for the Mul-tree in Fig.~\ref{Flo:Mul-treeEg}.}
\vspace{-20pt}
\label{Flo:Mul-treeEgMRF}
\end{wrapfigure}

A \mulT\ may have leaves that can be pruned and edges that can be contracted without altering the tree's information content, i.e., without adding or removing conflict-free quartets.  For example, in Fig.~\ref{Flo:Mul-treeEg}, every quartet topology that edge $(v,w)$ implies is either in conflict with some other topology (e.g., for set $\{b,c,d,e\}$) or is already implied by some other edge (e.g., $af|ce$ is also implied by $(t,u)$).  Thus, $(v,w)$ can be contracted without altering the information content.  In fact, the information content remains unchanged if we also contract $(u,v)$ and remove the leaves labeled $b(1)$ and $c(1)$.  We define the MRF of a \mulT\ $T$ as the tree that results from applying information-preserving edge contraction and leaf pruning operations repeatedly to $T$, until it is no longer possible to do so. For the tree in Fig.~\ref{Flo:Mul-treeEg}, the MRF is singly-labeled as shown in Fig.~\ref{Flo:Mul-treeEgMRF}. Note that, in general, the MRF may not be singly-labeled (see the example in Section~\ref{sec:Example}). If the MRF is itself a $\mulT$, it is not possible to reduce the original to a singly-labeled tree without either adding at least one quartet that did not exist conflict-free in $T$ or by losing one or more conflict-free quartets.

Since any two $\mulT$s with the same information content have the same MRF, rather than comparing $\mulT$s directly, we can instead compare their MRFs.   
This is appealing mathematically, because it focuses on conflict-free information content, and also computationally, since an MRF can be much smaller than the original $\mulT$. 
Indeed, on our test data, the MRF was frequently singly-labeled.   This reduction in input size is especially significant if the \mulT\ is an input to an algorithm whose running time is exponential in the label multiplicity, such as Ganapathy et al.'s algorithm to compute the contract-and-refine distance between two area cladograms \cite{ganapathy2006pattern} or Huber et al.'s algorithm to determine if a collection of ``multi-splits'' can be displayed by a MUL-tree \cite{huber2008complexity}. 

For our experiments, we also implemented a post-processing step, which converts the MRF to a singly-labeled tree, rendering it available for  analyses that require singly-labeled trees, including supermatrix \cite{de2007supermatrix,Wiens1995Combining} and supertree methods  \cite{Baum1992Combining,ragan1992phylogenetic,BansalBEFB2010,swenson2012superfine}.  
On the trees in our data set, the combined taxon loss between the MRF computation and the postprocessing was much lower than it would have been had we simply removed all duplicate taxa from the original trees.

Previous work on $\mulT$s has concentrated on finding ways to reduce $\mulT$s to singly-labeled trees (typically in order to provide inputs to supertree methods) 
\cite{scornavacca2010building}, and to develop metrics and algorithms to compare $\mulT$s \cite{ganapathy2006pattern,puigbo2007topd,marcet2011treeko,huber2010metrics}.  In contrast to our approach --- which is purely topology-based and is agnostic with respect to the cause of label multiplicity ---, the assumption underlying much of the literature on $\mulT$s is that taxon multiplicity results from gene duplication. 
Thus, methods to obtain singly-labeled trees from $\mulT$s usually work by pruning subtrees at putative duplication
nodes. Although the proposed algorithms are polynomial, they are unsatisfactory in various ways. For example, in \cite{scornavacca2010building}
if the subtrees are neither identical nor compatible, then the subtree
with smaller information content is pruned, which seems to discard too much information. Further, the algorithm is only efficient for binary rooted trees. In \cite{puigbo2007topd}
subtrees are pruned arbitrarily, while in \cite{marcet2011treeko} 
at each putative duplication node a separate analysis is done for each possible pruned subtree. 
Although the latter approach is better than pruning arbitrarily, in the worst case it can end up analyzing exponentially many subtrees.

%
{}

%
{}

\section{\label{sec:Preliminaries-and-Definitions}MUL-Trees and Information Content}

A \emph{$\mulT$} is a triple
$(T,M,\psi)$, where (i) $T$ is an unrooted tree\footnote{The results presented here can be extended to rooted trees, using triplets instead of quartets, exploiting the well-known bijection between rooted and unrooted trees~\cite[p.~20]{sempleSteelPhylogenetics}.  We do not discuss this further here for lack of space.} with leaf set $\Leaves(T)$ all of whose internal nodes have degree at least three, (ii) $M$ is a set of labels,
and (iii) $\psi_{T}:\Leaves(T) \rightarrow M$ is a surjective map that
assigns each leaf of $T$ a label from $M$.  (Note that if $\psi$ is a bijection, $T$ is singly labeled; that is, singly-labeled trees are a special case of $\mulT$s.)
For brevity we often refer to a $\mulT$ by its underlying tree $T$.  In what follows, unless stated otherwise,
by a tree we mean a $\mulT$. 

An edge $(u,v)$ in $T$ is \emph{internal}
if neither $u$ nor $v$ belong to $\Leaves(T)$, and is \emph{pendant} otherwise. A \emph{pendant node} is an internal node that has a leaf as its neighbor.

Let $(u,v)$ be an edge in $T$ and $T'$ be the result of deleting $(u,v)$ from $T$.  Then $T_{u}^{uv}$ ($T_{v}^{uv}$) denotes the subtree of $T'$ that contains
$u$ ($v$). $M_{u}^{uv}$ ($M_{v}^{uv}$) denotes the set
of labels in $T_{u}^{uv}$ ($T_{v}^{uv}$) but not in $T_{v}^{uv}$ ($T_{u}^{uv}$). $C^{uv}$ is the set of labels common to both $T_{u}^{uv}$
and $T_{v}^{uv}$.  Observe that $M_{u}^{uv}$, $M_{v}^{uv}$ and $C^{uv}$ partition $M$.
 For example, in Fig.~\ref{Flo:Mul-treeEg},
$M_{u}^{uv}=\left\{ a,f\right\} $, $M_{v}^{uv}=\left\{ e,d\right\} $,
$C^{uv}=\left\{ b,c\right\} $. 

A (resolved) \emph{quartet} in a $\mulT$ $T$ is a bipartition $ab|cd$ of a set of labels $\left\{ a,b,c,d\right\}$ such that there is an edge $(u,v)$ in $T$ with $\left\{ a,b\right\} \in M_{u}^{uv}$
and $\left\{ c,d\right\} \in M_{v}^{uv}$. We say that $(u,v)$ \emph{resolves} $ab|cd$.
For example, in Fig.~\ref{Flo:Mul-treeEg}, edge $(t,u)$
resolves $af|bc$.  

%
%


The \emph{information content of an edge}
$(u,v)$ of a \mulT\ $T$, denoted $\Delta(u,v)$, is the set of quartets
resolved by $(u,v)$. 
An edge $(u,v)$ in tree $T$ is \emph{informative}
if $|\Delta(u,v)|>0$;  $(u,v)$ is 
\emph{maximally informative} if there is no other edge
$(u',v')$ in $T$ with $\Delta(u,v)\subset\Delta(u',v')$.
The \emph{information content} of
$T$, denoted $\Info(T)$, is the combined information content of all edges in the tree; that is $\Info(T)=\bigcup_{(u,v)\text{\ensuremath{\in}}E}\Delta(u,v)$, where $E$ denotes the set of edges in $T$.
 
The next result shows that the quartets in $\Info(T)$ are conflict-free.

\begin{theorem}
\label{thm:singly_labeled}
For every  \mulT\ $T$, there is a singly labeled tree $T'$ such that $\Info(T) \subseteq \Info(T')$.\footnote{All proofs are in the Appendix.  The Appendix will not be part of the final submission.
}
\end{theorem}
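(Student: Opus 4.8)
The plan is to build $T'$ by a representative-selection construction carried out directly on $T$, exploiting the key fact that a quartet lies in $\Info(T)$ precisely when \emph{all} copies of its labels are cleanly separated by a single edge. First I would reformulate the membership condition. Since deleting an edge $(u,v)$ splits $T$ into exactly the two components $T_u^{uv}$ and $T_v^{uv}$, a label $a$ lies in $M_u^{uv}$ iff every leaf labeled $a$ lies in $T_u^{uv}$. Hence $ab|cd \in \Delta(u,v)$ iff all copies of $a$ and $b$ lie in $T_u^{uv}$ while all copies of $c$ and $d$ lie in $T_v^{uv}$, and so $ab|cd \in \Info(T)$ iff some edge separates every copy of $a,b$ from every copy of $c,d$. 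The point of this reformulation is that the separation is insensitive to which individual leaf of a given label we inspect; this robustness is exactly what distinguishes conflict-free quartets and is what makes the construction below succeed.

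Next I would construct $T'$. For each label $m \in M$ pick one representative leaf $\ell_m$ with $\psi_T(\ell_m)=m$, and let $S=\{\ell_m : m\in M\}$. Take the minimal subtree of $T$ spanning $S$ (the union of all paths in $T$ between members of $S$) and suppress every resulting degree-two vertex. I would first confirm that this yields a legitimate singly-labeled tree: each $\ell_m$ has degree one in $T$, hence degree at most one in the subtree, so it remains a leaf; minimality forces every leaf of the subtree to lie in $S$; thus the leaf set is exactly $S$, labeled bijectively by $M$, and after suppression every internal degree is at least three.

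Then I would verify $\Info(T)\subseteq\Info(T')$. Take any $ab|cd\in\Info(T)$, resolved by an edge $(u,v)$ as above, so that $\ell_a,\ell_b\in T_u^{uv}$ and $\ell_c,\ell_d\in T_v^{uv}$. The path from $\ell_a$ to $\ell_c$ in $T$ crosses $(u,v)$, and since both endpoints lie in $S$ this path survives in the spanning subtree; hence $(u,v)$ is an edge of the subtree. Suppressing degree-two vertices does not change the bipartition an edge induces on $S$, so the image $e'$ of $(u,v)$ in $T'$ separates $\{\ell_a,\ell_b\}$ from $\{\ell_c,\ell_d\}$. Because $T'$ is singly labeled, $e'$ resolves $ab|cd$, giving $ab|cd\in\Info(T')$ and therefore the claimed inclusion.

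The main obstacle is bookkeeping in the construction step rather than any deep argument: one must confirm that the minimal spanning subtree is a valid tree whose leaves are exactly the representatives, and that edge suppression preserves the split each surviving edge induces on $S$. Everything else follows from the reformulation, whose force is that conflict-freeness makes the choice of representatives irrelevant (for a conflicting quartet, different copies would fall on different sides and no single choice of representatives could respect all separations at once). One could alternatively argue through compatibility of the partial splits $M_u^{uv}\mid M_v^{uv}$ and invoke a split-equivalence theorem, but the representative construction is self-contained and avoids having to decide where to place the ``common'' labels $C^{uv}$.
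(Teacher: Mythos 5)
Your proof is correct, but your construction differs from the paper's. The paper keeps the entire tree $T$ and instead \emph{relabels}: for each multiply-occurring label it retains one arbitrary leaf with the original label and gives every other copy a fresh, unique label. The resulting $T'$ is singly labeled over a label set that strictly contains $M$, and the verification is immediate because the topology is untouched --- since $ab|cd\in\Delta(u,v)$ forces every copy of $a$ and $b$ into $T_u^{uv}$ and every copy of $c$ and $d$ into $T_v^{uv}$, the retained copies sit on the right sides and the very same edge $(u,v)$ still resolves $ab|cd$. You instead \emph{prune}: pick one representative leaf per label, take the minimal spanning subtree, and suppress degree-two vertices, yielding a singly-labeled tree on exactly $M$. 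Both arguments hinge on the same key observation (your ``reformulation''): conflict-freeness means every copy of each label of the quartet lies on one fixed side of the resolving edge, so the choice of representative is immaterial. What the paper's route buys is brevity --- no need to check that the spanning subtree is a valid tree, that its leaves are exactly the representatives, or that suppression preserves the induced split of $S$; the cost is that its witness $T'$ lives over an enlarged label set. Your route requires that extra (routine, and correctly handled) bookkeeping, but produces the arguably more natural witness: a singly-labeled tree on $M$ itself, which is closer in spirit to the restriction-to-unique-labels step the paper later uses in its evaluation. Both establish the theorem as stated, since the statement does not constrain the label set of $T'$.
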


Note that  there are examples where the containment  indicated by the above result is proper.

To conclude this section, we give some results that are useful for the reduction algorithm of Section~\ref{sec:Algorithm}.
In the next lemmas, $(u,v)$ and $(w,x)$ denote
two edges in tree $T$ that lie on the path $P_{u,x} = (u,v, \dots , w,x)$
as shown in Fig.~\ref{Flo:ProofLemma1}.  

\begin{wrapfigure}{r}{0.40\textwidth}
\begin{center}
\includegraphics[width=0.40\textwidth]{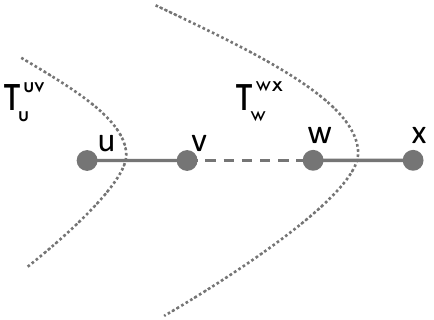}
\end{center}
\vspace{-20pt}
\caption{}
\vspace{-45pt}
\label{Flo:ProofLemma1}
\end{wrapfigure}

\begin{lemma}
\label{lem:2}If $|M_{u}^{uv}|=|M_{w}^{wx}|$ then $M_{u}^{uv}=M_{w}^{wx}$.  Otherwise, $M_{u}^{uv}\subset M_{w}^{wx}$.\end{lemma}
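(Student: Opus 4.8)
The plan is to reduce everything to a single containment of label sets, $M_u^{uv} \subseteq M_w^{wx}$, from which both cases of the statement follow immediately by counting. The starting point is to read off the nesting of the subtrees that the two edges induce. Because $(u,v)$ and $(w,x)$ occur in the order $u,v,\dots,w,x$ along the path $P_{u,x}$, deleting $(u,v)$ places everything from $v$ onward (in particular $w$ and $x$) on the $v$-side, while deleting $(w,x)$ places everything up to $w$ (in particular $u$ and $v$) on the $w$-side. First I would make this precise at the level of leaf sets: the component $T_u^{uv}$ is a subtree of the component $T_w^{wx}$, and symmetrically $T_x^{wx}$ is a subtree of $T_v^{uv}$. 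Both inclusions are just a statement about which side of each cut a given leaf falls on, and follow directly from the linear order of $u,v,w,x$ along the path together with the fact that removing an edge splits $T$ into exactly two pieces.

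Next I would prove $M_u^{uv} \subseteq M_w^{wx}$. Fix a label $\ell \in M_u^{uv}$. By definition $\ell$ labels some leaf of $T_u^{uv}$ and no leaf of $T_v^{uv}$. The first inclusion above then supplies a leaf of $T_w^{wx}$ labeled $\ell$, so $\ell$ occurs in $T_w^{wx}$; the second inclusion, read contrapositively, says that a label absent from $T_v^{uv}$ is also absent from the smaller $T_x^{wx}$, so $\ell$ does not occur in $T_x^{wx}$. Hence $\ell \in M_w^{wx}$, and the containment follows. Working with mere presence or absence of a label rather than with multiplicities is what keeps this step clean: the MUL-tree structure plays no role beyond ``some leaf carries this label.''

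Finally, both conclusions fall out of this one inclusion among finite sets. Since $M_u^{uv} \subseteq M_w^{wx}$ we always have $|M_u^{uv}| \le |M_w^{wx}|$. If the two cardinalities are equal, a subset of equal size must be the whole set, so $M_u^{uv} = M_w^{wx}$. Otherwise the inequality is strict, forcing $M_u^{uv} \subsetneq M_w^{wx}$, which is exactly the second case. I do not anticipate any genuine obstacle; the only point demanding care is orienting the two subtree inclusions correctly, i.e., tracking which endpoint lies on which side of each edge given the direction of the path. Once that bookkeeping is pinned down, the remainder is immediate.
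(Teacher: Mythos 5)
Your proposal is correct and follows essentially the same route as the paper: both rest on the subtree containment $T_u^{uv}\subseteq T_w^{wx}$ (with the symmetric containment $T_x^{wx}\subseteq T_v^{uv}$, which the paper leaves implicit in the phrase ``by definition of $M_u^{uv}$''), deduce $M_u^{uv}\subseteq M_w^{wx}$, and finish by comparing cardinalities of finite sets. Your version merely spells out the membership check for a label $\ell$ in more detail than the paper does.
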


Together with Lemma \ref{lem:2}, the next result allows us to check whether the information content of an edge is a subset of that of another based solely on the
cardinalities of the $M_{u}^{uv}$s.

\begin{lemma}
\label{lem:3}$\Delta(u,v)\subseteq\Delta(w,x)$ if and only if $M_{v}^{uv}=M_{x}^{wx}$. 
\end{lemma}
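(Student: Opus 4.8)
The plan is to prove the two directions separately, after first recording two containments that hold for \emph{any} two edges $(u,v)$ and $(w,x)$ lying on $P_{u,x}$, irrespective of cardinalities. Because the path visits the nodes in the order $u,v,\dots,w,x$, deleting $(u,v)$ leaves $u$ on the side away from $x$, so $T_u^{uv}$ is a subtree of $T_w^{wx}$; symmetrically $T_x^{wx}$ is a subtree of $T_v^{uv}$. From the first inclusion, any label in $M_u^{uv}$ occurs in $T_w^{wx}$ and, being absent from $T_v^{uv}\supseteq T_x^{wx}$, is absent from $T_x^{wx}$; hence $M_u^{uv}\subseteq M_w^{wx}$, and the mirror argument gives $M_x^{wx}\subseteq M_v^{uv}$. (Both inclusions are also immediate from Lemma~\ref{lem:2} applied to $P_{u,x}$ and to its reversal $P_{x,u}$.) In particular, the equality $M_v^{uv}=M_x^{wx}$ is equivalent to the single inclusion $M_v^{uv}\subseteq M_x^{wx}$, so only one inclusion ever has to be checked.

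For the ``if'' direction, suppose $M_v^{uv}=M_x^{wx}$ and take any quartet $ab|cd\in\Delta(u,v)$, so that $\{a,b\}\subseteq M_u^{uv}$ and $\{c,d\}\subseteq M_v^{uv}$. The containment $M_u^{uv}\subseteq M_w^{wx}$ gives $\{a,b\}\subseteq M_w^{wx}$, while the hypothesis gives $\{c,d\}\subseteq M_x^{wx}$; hence $(w,x)$ resolves $ab|cd$, and $\Delta(u,v)\subseteq\Delta(w,x)$.

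For the ``only if'' direction I would argue the contrapositive. Assume $M_v^{uv}\neq M_x^{wx}$; by the first paragraph this forces $M_x^{wx}\subsetneq M_v^{uv}$, so I fix a label $c\in M_v^{uv}\setminus M_x^{wx}$. Choosing distinct $a,b\in M_u^{uv}$ and a label $d\in M_v^{uv}\setminus\{c\}$, the quartet $ab|cd$ lies in $\Delta(u,v)$, its four labels being distinct since $M_u^{uv}$ and $M_v^{uv}$ are disjoint blocks of the partition of $M$. But $a,b\in M_u^{uv}\subseteq M_w^{wx}$, and $M_w^{wx}$ is disjoint from $M_x^{wx}$, so the only way $(w,x)$ could resolve $ab|cd$ would be to have $\{c,d\}\subseteq M_x^{wx}$ --- which fails because $c\notin M_x^{wx}$. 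Thus $ab|cd\in\Delta(u,v)\setminus\Delta(w,x)$, establishing $\Delta(u,v)\not\subseteq\Delta(w,x)$.

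The only real obstacle is the construction in the last paragraph, which tacitly requires $|M_u^{uv}|\ge 2$ and $|M_v^{uv}|\ge 2$ so that $a,b$ and $c,d$ can be picked distinct; this is precisely the hypothesis that $(u,v)$ be informative. For a non-informative $(u,v)$ one has $\Delta(u,v)=\emptyset$, so $\Delta(u,v)\subseteq\Delta(w,x)$ holds vacuously while $M_v^{uv}=M_x^{wx}$ may fail, and the statement should therefore be read as restricted to informative edges --- the only ones relevant to the reduction of Section~\ref{sec:Algorithm}. I would make this hypothesis explicit. No further case analysis is needed, since all the positional facts about $T_u^{uv},T_v^{uv},T_w^{wx},T_x^{wx}$ used above follow uniformly from the ordering of the nodes along $P_{u,x}$.
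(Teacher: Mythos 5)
Your proof is correct and follows the same skeleton as the paper's: both directions come down to the containments $M_{u}^{uv}\subseteq M_{w}^{wx}$ and $M_{x}^{wx}\subseteq M_{v}^{uv}$ forced by the positions of the two edges on $P_{u,x}$, and your ``if'' direction is the paper's argument made explicit. Where you genuinely add something is the ``only if'' direction: the paper writes only ``Suppose $\Delta(u,v)\subseteq\Delta(w,x)$; therefore, $M_{v}^{uv}\subseteq M_{x}^{wx}$,'' leaving that implication unjustified, whereas you prove its contrapositive by exhibiting a witness quartet $ab|cd$ with $c\in M_{v}^{uv}\setminus M_{x}^{wx}$ that $(u,v)$ resolves and $(w,x)$ cannot (since $a,b\in M_{w}^{wx}$ forces the only candidate split to put $\{c,d\}$ inside $M_{x}^{wx}$). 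That is the real content of the lemma, and your write-up supplies it. You are also right that this construction needs $|M_{u}^{uv}|\ge 2$ and $|M_{v}^{uv}|\ge 2$, i.e., that $(u,v)$ be informative: if, say, $|M_{u}^{uv}|=1$ then $\Delta(u,v)=\emptyset$ and the containment holds vacuously, while a subtree branching off $P_{u,x}$ carrying a label absent from both $T_{u}^{uv}$ and $T_{x}^{wx}$ makes $M_{v}^{uv}\neq M_{x}^{wx}$, so the stated equivalence fails. The paper omits this hypothesis; it is harmless in context because all non-informative edges are contracted during preprocessing before the lemma is used in Section~\ref{sec:Algorithm}, but making it explicit, as you propose, is the right call.
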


\begin{lemma}
\label{lem:4}Suppose $\Delta(u,v)\subseteq\Delta(w,x)$.  Then, for any edge $(y,z)$ on $P_{u,x}$ such that $v$ is closer to $y$ than to $z$, $\Delta(u,v)\subseteq\Delta(y,z)\subseteq\Delta(w,x)$. 
\end{lemma}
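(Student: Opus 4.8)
The plan is to translate both desired containments into the equality criterion supplied by Lemma~\ref{lem:3} and then obtain those equalities by a squeezing argument based on the monotonicity of Lemma~\ref{lem:2}. Orient the edge $(y,z)$ so that $z$ is the endpoint lying on the $(w,x)$-side of the path; this is exactly what the hypothesis ``$v$ is closer to $y$ than to $z$'' encodes, so that the path reads $(u,v,\dots,y,z,\dots,w,x)$. Applying Lemma~\ref{lem:3} to the pairs $(u,v),(y,z)$ and $(y,z),(w,x)$, the two conclusions $\Delta(u,v)\subseteq\Delta(y,z)$ and $\Delta(y,z)\subseteq\Delta(w,x)$ are respectively equivalent to $M_{v}^{uv}=M_{z}^{yz}$ and $M_{z}^{yz}=M_{x}^{wx}$. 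Hence it suffices to prove the single chain of equalities $M_{v}^{uv}=M_{z}^{yz}=M_{x}^{wx}$.

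First I would record what the hypothesis gives: since $\Delta(u,v)\subseteq\Delta(w,x)$, Lemma~\ref{lem:3} yields $M_{v}^{uv}=M_{x}^{wx}$. Next I would set up the squeeze. Because $(y,z)$ sits between $(u,v)$ and $(w,x)$ with $z$ toward $x$, the three subtrees on the $x$-side are nested, $T_{x}^{wx}\subseteq T_{z}^{yz}\subseteq T_{v}^{uv}$, and consequently so are their exclusive-label sets. This monotonicity is precisely Lemma~\ref{lem:2} read along the reversed path $(x,w,\dots,z,y,\dots,v,u)$, in which $x$ plays the role of the left endpoint, and it gives $M_{x}^{wx}\subseteq M_{z}^{yz}\subseteq M_{v}^{uv}$. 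Combining this with the equality $M_{v}^{uv}=M_{x}^{wx}$ from the previous step forces both inclusions to be equalities, so $M_{z}^{yz}=M_{v}^{uv}=M_{x}^{wx}$, which is exactly what was needed.

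The substantive content of the argument is thus short; the step I expect to require the most care is the orientation bookkeeping, namely verifying that Lemma~\ref{lem:2} and Lemma~\ref{lem:3} are invoked on the correctly ordered endpoints so that the superscripts line up. In particular, Lemma~\ref{lem:2} as stated asserts monotonicity of the \emph{left}-side label sets along $P_{u,x}$, whereas I need monotonicity of the $x$-side sets; this is why I apply it to the reversed path (equivalently, one can rerun the short subtree-inclusion argument directly). Finally, the two boundary choices $(y,z)=(u,v)$ and $(y,z)=(w,x)$ should be dismissed at the outset, since there $\Delta(y,z)$ equals $\Delta(u,v)$ or $\Delta(w,x)$ and both containments are immediate; for every interior edge the nesting $T_{x}^{wx}\subseteq T_{z}^{yz}\subseteq T_{v}^{uv}$ holds and the squeeze applies.
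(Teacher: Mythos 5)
Your proposal is correct and follows essentially the same route as the paper: use Lemma~\ref{lem:3} to convert the hypothesis into $M_{v}^{uv}=M_{x}^{wx}$, observe the nesting $M_{v}^{uv}\supseteq M_{z}^{yz}\supseteq M_{x}^{wx}$ for any intermediate edge, squeeze to get equality, and convert back via Lemma~\ref{lem:3}. The extra care you take with orientation and the boundary cases is harmless but not needed beyond what the paper already does implicitly.
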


\section{Maximally Reduced MUL-Trees}
\label{sec:MRF}

Our goal is to provide a way to reduce a \mulT\ $T$ as much as possible, while preserving its information content.  Our reduction algorithm uses the following operations.
\begin{description}
\item[Prune$(v)$:]  Delete leaf  $v$ from $T$.  If, as a result, $v$'s neighbor $u$ becomes a degree-two node, connect the former two neighbors of $u$ by an edge and delete $u$. 
\item[Contract$(e)$:] Delete an internal edge $e$ and identify its endpoints.
\end{description}

A leaf $v$ in $T$ is \emph{prunable} if the tree that results from pruning  $v$ has the same information content as $T$. An internal edge $e$ in $T$ is \emph{contractible} if the tree that results from contracting $e$ has the same information content as $T$.
$T$ is \emph{maximally
reduced} if it has no prunable leaf and no contractible internal edge.

\begin{theorem}
\label{thm:5}Every internal edge in a maximally reduced tree $T$ resolves
a quartet that is resolved by no other edge. 
\end{theorem}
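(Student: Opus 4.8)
The plan is to prove the statement by contradiction, by showing that an internal edge resolving no quartet uniquely would be contractible, contradicting maximality. So let $(u,v)$ be an internal edge of a maximally reduced tree $T$, and suppose toward a contradiction that every quartet in $\Delta(u,v)$ is also resolved by some other edge; that is, $\Delta(u,v)\subseteq\bigcup_{(w,x)\neq(u,v)}\Delta(w,x)$. The goal is to derive that contracting $(u,v)$ leaves $\Info(T)$ unchanged.

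The crux is to understand exactly how contraction affects information content. Let $T'$ be the tree obtained by contracting $(u,v)$, i.e.\ deleting the edge and identifying $u$ and $v$ into a single node; since $(u,v)$ is internal, both endpoints have degree at least three, so the merged node has degree at least three and $T'$ is a valid $\mulT$. The key observation I would establish is that contraction preserves every \emph{other} edge's information content and creates no new quartets. Indeed, the edges of $T'$ are precisely the edges of $T$ other than $(u,v)$. For any such edge $(w,x)$, note that $u$ and $v$ (being adjacent) lie on the same side of $(w,x)$, so deleting $(w,x)$ induces exactly the same bipartition of the leaf set in $T'$ as in $T$. Hence $M_{w}^{wx}$, $M_{x}^{wx}$, and $C^{wx}$ are unaffected by the contraction, and therefore $\Delta(w,x)$ is the same in $T'$ as in $T$. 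Consequently $\Info(T')=\bigcup_{(w,x)\neq(u,v)}\Delta(w,x)$, evaluated exactly as in $T$.

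Given this, the conclusion is immediate. No new quartet can appear, since every edge of $T'$ resolves precisely what it resolved in $T$, so $\Info(T')\subseteq\Info(T)$; and under the contradiction hypothesis $\Delta(u,v)\subseteq\bigcup_{(w,x)\neq(u,v)}\Delta(w,x)=\Info(T')$, so no quartet of $\Info(T)$ is lost either. Thus $\Info(T')=\Info(T)$, making $(u,v)$ contractible and contradicting the assumption that $T$ is maximally reduced. Hence every internal edge of $T$ must resolve at least one quartet resolved by no other edge. The only real content lies in the invariance claim $\Delta_{T'}(w,x)=\Delta_{T}(w,x)$ for $(w,x)\neq(u,v)$, which I expect to be the main (though mild) obstacle: it requires arguing carefully that contracting a single internal edge preserves the leaf bipartition of every surviving edge, so that both the common-label set $C^{wx}$ and the resolved quartets stay fixed and, in particular, no spurious quartet is introduced. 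The remaining steps are just bookkeeping with the union defining $\Info$.
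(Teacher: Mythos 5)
Your proof is correct, but it takes a genuinely different route from the paper's. You argue by contradiction via contractibility: the key observation is that contracting an internal edge $(u,v)$ leaves every other edge's bipartition of the leaf set --- and hence its sets $M_{w}^{wx}$, $M_{x}^{wx}$, $C^{wx}$ and its quartet set $\Delta(w,x)$ --- unchanged, so $\Info(T')=\bigcup_{e\neq(u,v)}\Delta_T(e)$; if $(u,v)$ resolves no quartet uniquely, this union already equals $\Info(T)$ and $(u,v)$ is contractible. That invariance claim is correct (the contracted edge lies wholly on one side of any surviving edge, and no parallel edges or degree-two nodes can arise since $T$ is a tree with internal degrees at least three), and it cleanly handles the degenerate case $\Delta(u,v)=\emptyset$ as well. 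The paper instead gives a direct, constructive argument: using the facts that internal nodes have degree at least three and that every internal edge of a maximally reduced tree is informative, it performs a case analysis on the neighbors of $u$ (and symmetrically $v$) to exhibit labels $a,b$ lying in distinct components off $u$ and $c,d$ in distinct components off $v$, producing a quartet $ab|cd$ of the specific form in Fig.~\ref{Flo:uniqueQuartet}. Your approach is shorter and isolates the essential reason the statement is true; the paper's approach buys an explicit witness whose structural form is reused later (in the proof of Theorem~\ref{thm:10} via Lemma~\ref{lem:8}), whereas with your argument that form would still have to be derived separately from the uniqueness of the resolved quartet, as the paper does in the remark following the theorem.
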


Note that a quartet $ab|cd$ that is resolved by edge $(u,v)$, but by no other edge  must have the form illustrated in Fig.~\ref{Flo:uniqueQuartet}.

\begin{wrapfigure}{r}{0.35\textwidth}
\vspace{-30pt}
\begin{center}
\includegraphics[width=0.35\textwidth]{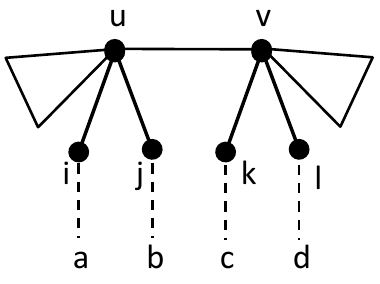}
\end{center}
\vspace{-20pt}
\caption{Quartet $ab|cd$ is resolved only by edge $(u,v)$. Here, $a\in M_{i}^{ui}$, $b\in M_{j}^{uj}$, $c\in M_{k}^{vk}$ and $d\in M_{l}^{vl}$.}
\vspace{-30pt}
\label{Flo:uniqueQuartet}
\end{wrapfigure}

Next, we show that the set of quartets resolved by a maximally reduced
tree uniquely identifies the tree. 

\begin{theorem}
\label{thm:10}Let $T$ and $T'$ be two maximally reduced trees such that $\Info(T)=\Info(T')$.  Then, $T$ and $T'$ are isomorphic.
\end{theorem}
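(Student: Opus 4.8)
The plan is to show that $\Info(T)$ determines $T$ up to isomorphism by reconstructing $T$ canonically from its quartet set; since $\Info(T)=\Info(T')$ and the reconstruction depends only on the quartet set, $T$ and $T'$ must then be isomorphic. The reconstruction proceeds in three stages: recover the internal edges together with the bipartitions they induce on the labels, recover how these edges are glued together at the internal nodes, and finally attach the leaves with their labels. Throughout I use that in a maximally reduced tree every internal edge is informative and, by Theorem~\ref{thm:5}, resolves a private quartet; in particular the map $e \mapsto \Delta(e)$ is injective on internal edges, so no two internal edges carry the same information.

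First I would recover, for each internal edge $(u,v)$, the unordered pair $\{M_u^{uv},M_v^{uv}\}$ from its quartets. Since $(u,v)$ is informative, both parts contain at least two labels, and within $\Delta(u,v)$ two labels lie on the same side of the split if and only if they co-occur on a common side of some quartet of $\Delta(u,v)$; this relation has exactly two classes, namely $M_u^{uv}$ and $M_v^{uv}$. The difficulty is that we are handed only the union $\Info(T)=\bigcup_e \Delta(e)$, not the individual $\Delta(e)$, so the real work is to \emph{deconvolve} $\Info(T)$ into its per-edge contributions. Here the private quartets of Theorem~\ref{thm:5} serve as seeds that distinguish edges, and Lemmas~\ref{lem:2}--\ref{lem:4} organize the edges lying on a common path: along such a path the ``far'' label sets grow monotonically (Lemma~\ref{lem:2}), $\Delta$-containment is detected by equality of the ``near'' label sets (Lemma~\ref{lem:3}), and containments are realized contiguously (Lemma~\ref{lem:4}). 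I would use these to argue that the maximally informative edges correspond to the inclusion-maximal partial splits extractable from $\Info(T)$, and that the remaining edges are then pinned down by the nesting structure along paths.

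Next I would reassemble the skeleton of internal nodes and edges. Because $\Info(T)$ is conflict-free -- by Theorem~\ref{thm:singly_labeled} it is displayed by a single singly-labeled tree -- the recovered partial splits are pairwise compatible, and a splits-equivalence-style argument adapted to partial splits assembles them into a unique arrangement of internal edges around the internal nodes. Finally the leaves and their labels are attached: the leaf multiplicities and attachment points are forced by maximal reduction, since no leaf is prunable, in combination with the recovered core sets $C^{uv}$ that record, for each edge, which labels appear on both sides. Matching the reconstructed node sets, edges, and label-preserving leaf attachments of $T$ and $T'$ then yields the desired isomorphism.

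I expect the main obstacle to be precisely the deconvolution step together with the presence of repeated labels. In the singly-labeled setting each label lies on exactly one side of every edge, the cores $C^{uv}$ are empty, and the classical quartet/splits-equivalence theorem recovers the tree directly; here the splits are only partial, several leaves may share a label, and a single quartet of $\Info(T)$ may be resolved by several edges, so isolating the contribution of one edge and reconstructing the multiplicities and attachment points of repeated labels is the delicate part. A secondary point to verify is that the label set $M$ itself is recoverable, i.e.\ that in a maximally reduced tree every label occurs in at least one conflict-free quartet so that no labels are invisible to $\Info(T)$; this should follow from the absence of prunable leaves.
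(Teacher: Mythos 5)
Your proposal is a plan rather than a proof: the two steps you yourself flag as ``the real work'' --- deconvolving $\Info(T)=\bigcup_{e}\Delta(e)$ into per-edge contributions, and assembling the recovered partial splits into a unique tree --- are exactly where the content of the theorem lies, and neither is carried out. The deconvolution is genuinely problematic: a single quartet $ab|cd\in\Info(T)$ can be resolved by a whole path of edges (this is the situation governed by Lemma~\ref{lem:4}), so ``inclusion-maximal partial splits extractable from $\Info(T)$'' is not a well-defined recipe until you explain how to decide, from the bare union of quartets, which quartets are to be grouped together; your same-side equivalence relation presupposes that you already hold $\Delta(u,v)$ in isolation. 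The ``splits-equivalence-style argument adapted to partial splits'' is also not available off the shelf: the classical splits-equivalence theorem concerns full splits of a singly-labeled leaf set, whereas here each edge induces only a partial split (the labels in $C^{uv}$ straddle the edge), several leaves may share a label, and the question of which collections of such multi-splits are displayed by a MUL-tree, and how uniquely, is itself hard --- uniqueness of the assembly is precisely the assertion being proved, not a tool you may invoke. Finally, the attachment points and multiplicities of repeated labels (e.g.\ the two leaves labeled $f$ in Fig.~\ref{Flo:irreplacable_Mul-tree}) are not visibly encoded in $\Info(T)$; saying they are ``forced by maximal reduction'' restates the leaf-level half of the theorem without an argument.

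The paper avoids canonical reconstruction entirely. Using Theorem~\ref{thm:5} it anchors each internal edge $(u,v)$ of $T$ to a \emph{private} quartet, shows that this quartet must be resolved by a single edge $(u',v')$ of $T'$ (if a whole path of edges of $T'$ resolved it, the private quartet of the path's end edge would supply a label $\ell$ forcing a quartet of $T'$ that $T$ resolves differently, contradicting $\Info(T)=\Info(T')$), and then uses auxiliary quartets of the form $af|cd$ to conclude $M_u^{uv}=M_{u'}^{u'v'}$ and $M_v^{uv}=M_{v'}^{u'v'}$ (Lemma~\ref{lem:8}); a separate argument (Lemma~\ref{lem:9}) shows this bijection preserves adjacency of internal edges, and the leaves are then matched inductively, pendant node by pendant node, using the absence of prunable leaves. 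If you want to salvage your approach you would have to prove essentially these same statements, rephrased as ``the quartet set determines the edges, their incidences, and the leaf attachments''; the private quartets of Theorem~\ref{thm:5} are the indispensable seeds in either formulation, but the conflict argument that pins each private quartet to a \emph{unique} edge of the other tree is the step your sketch is missing.
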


The \emph{maximally reduced form} (MRF) of a \mulT\ $T$ is the tree that results from repeatedly pruning prunable leaves and contracting contractible edges from $T$ until this is no longer possible.  Theorem~\ref{thm:10} shows that we can indeed talk about ``the'' MRF of $T$.

\begin{corollary}
Every $\mulT$ has a unique MRF.
\end{corollary}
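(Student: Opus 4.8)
The plan is to deduce the corollary almost immediately from Theorem~\ref{thm:10}, after establishing that the reduction process is well-defined and terminates. First I would observe that both the Prune and Contract operations strictly decrease the size of the tree: pruning removes a leaf (and possibly an internal node), while contracting removes an internal edge. Hence any sequence of such operations must terminate after finitely many steps, and the tree at which it halts is, by definition, maximally reduced --- no prunable leaf and no contractible internal edge remain.

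Next I would track the information content along any such reduction sequence. By the definitions of \emph{prunable} and \emph{contractible}, each individual operation preserves $\Info(\cdot)$: pruning a prunable leaf or contracting a contractible edge yields a tree with exactly the same information content. By induction on the number of operations, the maximally reduced tree $\hat{T}$ obtained at the end of any reduction sequence satisfies $\Info(\hat{T}) = \Info(T)$.

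The only remaining concern is whether the order in which the operations are applied could produce non-isomorphic endpoints. Suppose two (possibly different) reduction sequences applied to $T$ terminate in maximally reduced trees $T_1$ and $T_2$. By the previous paragraph, $\Info(T_1) = \Info(T) = \Info(T_2)$, and both $T_1$ and $T_2$ are maximally reduced. Theorem~\ref{thm:10} then forces $T_1$ and $T_2$ to be isomorphic. Thus the MRF is independent of the order of reductions, and $T$ has a unique MRF.

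I do not anticipate a genuine obstacle here, since Theorem~\ref{thm:10} does all the heavy lifting; the only care required is the bookkeeping that each reduction step both preserves $\Info$ and strictly shrinks the tree, which together guarantee termination at a maximally reduced tree to which Theorem~\ref{thm:10} applies.
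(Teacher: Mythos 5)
Your argument is correct and is essentially the paper's: the authors also derive this corollary directly from Theorem~\ref{thm:10}, treating the termination and $\Info$-preservation bookkeeping as immediate from the definitions of prunable and contractible. Your write-up simply makes those two implicit steps explicit, which is fine.
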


\begin{corollary}
Any two $\mulT$s with the same information content have the same
MRF.
\end{corollary}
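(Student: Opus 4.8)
The plan is to derive this directly from Theorem~\ref{thm:10}, treating it as a routine consequence rather than mounting a fresh argument. The one substantive observation I would isolate first is that the reduction process preserves information content. This holds for a single reduction step essentially by definition: a leaf is \emph{prunable}, and an internal edge is \emph{contractible}, precisely when the corresponding operation leaves $\Info$ unchanged. Chaining such steps, a straightforward induction on the number of operations applied shows that for any $\mulT$ $T$ with MRF $R$ we have $\Info(R) = \Info(T)$. I would also record that $R$ is maximally reduced, since by definition the reduction procedure halts exactly when no prunable leaf and no contractible internal edge remain.

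With these two facts in hand, the corollary is immediate. Let $T_1$ and $T_2$ be two $\mulT$s with $\Info(T_1) = \Info(T_2)$, and let $R_1$ and $R_2$ be their respective MRFs. Then
\[
\Info(R_1) = \Info(T_1) = \Info(T_2) = \Info(R_2),
\]
and both $R_1$ and $R_2$ are maximally reduced. Applying Theorem~\ref{thm:10} to the pair $R_1, R_2$ yields that they are isomorphic; that is, $T_1$ and $T_2$ have the same MRF.

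The hard part is not located in this corollary at all: all of the genuine work is carried by Theorem~\ref{thm:10}, which I am free to assume. Given that theorem, the only place demanding any care is the claim that reduction preserves $\Info$, and even this reduces to the definitions of prunable and contractible together with the induction above. It is worth noting, finally, that the argument does not actually presuppose that the MRF is well-defined (the content of the preceding corollary): even if the reduction procedure admitted several distinct halting trees, each would be maximally reduced with information content equal to that of its input, so Theorem~\ref{thm:10} would force all such trees --- across both $T_1$ and $T_2$ --- to be mutually isomorphic.
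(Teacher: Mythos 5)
Your proposal is correct and matches the paper's (implicit) argument exactly: the paper derives this corollary directly from Theorem~\ref{thm:10}, using precisely the observation that pruning prunable leaves and contracting contractible edges preserves $\Info$ by definition, so the two MRFs are maximally reduced trees with equal information content and hence isomorphic. Your closing remark that the argument does not presuppose uniqueness of the MRF is a nice bonus but changes nothing of substance.
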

\begin{corollary}\label{cor:3}
If a maximally reduced \mulT\ $T$  is not singly-labeled, there
does not exist a singly-labeled tree having the same information content as $T$. 
\end{corollary}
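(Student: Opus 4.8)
The plan is to argue by contradiction, using the uniqueness of the maximally reduced form together with the fact that the reduction operations cannot eliminate repeated labels. Suppose $T$ is maximally reduced and not singly-labeled, yet there is a singly-labeled tree $S$ with $\Info(S)=\Info(T)$. I will identify the MRF of $S$ in two incompatible ways and thereby reach a contradiction.

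First I would record the key structural observation: both reduction operations preserve the singly-labeled property. Pruning a leaf $v$ deletes the unique leaf carrying the label $\psi(v)$, and the possible suppression of a resulting degree-two node affects only internal structure and never touches leaves; thus the remaining labeling stays a bijection onto its (now smaller) label set. Contracting an internal edge merely identifies two internal nodes, leaving $\Leaves(T)$ and $\psi$ unchanged. Since the MRF is obtained by a finite sequence of such operations, an easy induction on the number of operations shows that the MRF of any singly-labeled tree is again singly-labeled. In particular, the MRF of $S$ is singly-labeled.

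Next I would apply Theorem~\ref{thm:10}. By definition, pruning prunable leaves and contracting contractible edges preserve information content, so the MRF of $S$ has the same information content as $S$, namely $\Info(T)$. Moreover $T$, being maximally reduced, admits no prune or contract operation and is therefore its own MRF, with information content $\Info(T)$. Hence the MRF of $S$ and the tree $T$ are both maximally reduced trees with identical information content, so by Theorem~\ref{thm:10} they are isomorphic. This is the desired contradiction: the MRF of $S$ is singly-labeled by the previous paragraph, whereas $T$ is not. Therefore no singly-labeled tree can share the information content of $T$.

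The only step that needs genuine care is the structural observation that \emph{Prune} and \emph{Contract} preserve singly-labeledness; everything else is a direct appeal to the MRF machinery already developed. The one subtlety to verify is that suppressing a degree-two node during a prune cannot accidentally merge two leaves or reintroduce a label, which it cannot, since that clean-up step removes an internal node and operates entirely away from the leaves.
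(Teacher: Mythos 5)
Your proof is correct and takes essentially the same route the paper intends: the result is derived from Theorem~\ref{thm:10} by reducing the hypothetical singly-labeled tree $S$ to its MRF, noting that $\Info$ is preserved and that $T$ is its own MRF, and invoking uniqueness to force an isomorphism between a singly-labeled and a non-singly-labeled tree. The paper states this as a corollary without writing the argument out, and your explicit verification that \emph{Prune} and \emph{Contract} preserve singly-labeledness is precisely the detail needed to make that implicit derivation rigorous.
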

\begin{wrapfigure}{r}{0.35\textwidth}
\begin{center}
\includegraphics[width=0.35\textwidth]{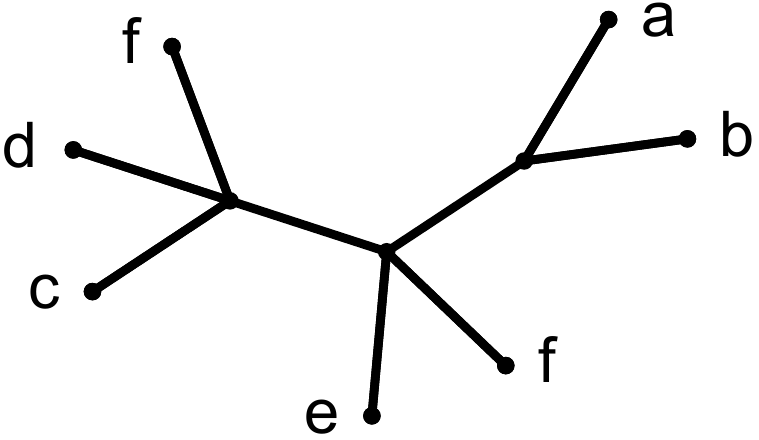}
\end{center}
\vspace{-20pt}
\caption{A maximally reduced $\mulT$}
\vspace{-20pt}
\label{Flo:irreplacable_Mul-tree}
\end{wrapfigure}

Fig \ref{Flo:irreplacable_Mul-tree} illustrates the last result. 
Any singly-labeled tree resolving the same set of quartets must be
obtained by removing one of the leaves labeled with $f$. However,
doing so will also introduce quartets that are not resolved by the
maximally reduced $\mulT$.
Note that Corollary \ref{cor:3} does not contradict with Theorem \ref{thm:singly_labeled}. If the $\mulT$ 
in Theorem \ref{thm:singly_labeled} is maximally reduced and not singly-labeled, the containment is proper;
i.e., $\Info(T) \neq \Info(T')$, which is the claim of Corollary \ref{cor:3}.

\begin{corollary}
The relation \textquotedblleft{}sharing a common MRF\textquotedblright{}
is an equivalence relation on the set of $\mulT$s .
\end{corollary}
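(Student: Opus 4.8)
The plan is to observe that this corollary is a purely formal consequence of the preceding uniqueness corollary, so that essentially all of the substantive work has already been carried out. First I would fix notation: for a $\mulT$ $T$, let $\mathrm{MRF}(T)$ denote its maximally reduced form, which by the corollary ``Every $\mulT$ has a unique MRF'' is well defined up to isomorphism. I would then write the relation explicitly as $T \sim T'$ if and only if $\mathrm{MRF}(T)$ is isomorphic to $\mathrm{MRF}(T')$, emphasizing that ``sharing a common MRF'' must be interpreted up to isomorphism, since Theorem~\ref{thm:10} guarantees the MRF to be unique only in that sense.

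The key steps are then to verify the three defining properties of an equivalence relation, each of which reduces to the corresponding property of tree isomorphism. For reflexivity, $\mathrm{MRF}(T)$ is isomorphic to itself, so $T \sim T$. For symmetry, if $\mathrm{MRF}(T)$ is isomorphic to $\mathrm{MRF}(T')$ then $\mathrm{MRF}(T')$ is isomorphic to $\mathrm{MRF}(T)$, so $T \sim T'$ implies $T' \sim T$. For transitivity, if $\mathrm{MRF}(T)$ is isomorphic to $\mathrm{MRF}(T')$ and $\mathrm{MRF}(T')$ is isomorphic to $\mathrm{MRF}(T'')$, then composing the two isomorphisms yields an isomorphism $\mathrm{MRF}(T) \to \mathrm{MRF}(T'')$, so $T \sim T'$ and $T' \sim T''$ imply $T \sim T''$. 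In short, $\sim$ is the pullback of the isomorphism relation along the well-defined map $\mathrm{MRF}$, and the pullback of any equivalence relation along any map is again an equivalence relation.

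The only point requiring any care — and hence the ``main obstacle,'' though it is a mild one — is ensuring that $\mathrm{MRF}$ is genuinely a function, i.e., that each $\mulT$ possesses one and only one maximally reduced form. This is exactly what the uniqueness corollary (itself resting on Theorem~\ref{thm:10}) supplies: repeatedly applying prunable-leaf and contractible-edge operations terminates in a maximally reduced tree, and any two such outcomes share the same information content and are therefore isomorphic by Theorem~\ref{thm:10}. Once this well-definedness is granted, no further reasoning about quartets is needed, and the equivalence-relation axioms follow formally from the reflexivity, symmetry, and transitivity of isomorphism.
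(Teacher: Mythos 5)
Your proposal is correct and matches the paper's (implicit) reasoning: the paper treats this corollary as an immediate formal consequence of the uniqueness of the MRF (Theorem~\ref{thm:10}), exactly as you do by pulling back the isomorphism relation along the well-defined map $T \mapsto \mathrm{MRF}(T)$. No further comment is needed.
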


The last result implies that $\mulT$s can be partitioned into equivalence classes, where each class consists of the set of all trees with the same information content.  Thus, instead of comparing $\mulT$s directly, we can compare their maximally reduced forms.  

To end this section, we give some results that help to identify contractible edges and prunable leaves.   The setting is the same as for Lemmas~\ref{lem:3} and~\ref{lem:4}: $(u,v)$ and $(w,x)$ are two edges in tree $T$ that lie on the path $P_{u,x} = (u,v, \dots , w,x)$
(see Fig.~\ref{Flo:ProofLemma1}).
We say that subtree $T_{z}^{yz}$ \emph{branches out} from the path $P_{u,x}$
if $y\in P_{u,x} - \{u,x\}$, and $z\notin P_{u,x}$.

\begin{lemma}
\label{lm:prune_n_contract}
Suppose $\Delta(u,v)\subseteq\Delta(w,x)$ then
\begin{enumerate}[(i)]
\item every internal edge on a subtree branching out from $P_{u,x}$ is contractible, and 
\item if $\Delta(u,v)=\Delta(w,x)$, every leaf on a subtree branching out from $P_{u,x}$ is prunable.  Thus, the entire subtree can be deleted without changing the information content of the tree. 
\end{enumerate}
\end{lemma}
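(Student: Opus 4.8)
The plan is to reduce everything to statements about the label sets $M_u^{uv}$, $M_v^{uv}$, $C^{uv}$ and to exploit the fact that every label occurring in a branching subtree is a \emph{common} label. First I would record the consequences of the hypothesis. Since $\Delta(u,v)\subseteq\Delta(w,x)$, Lemma~\ref{lem:3} gives $M_v^{uv}=M_x^{wx}$ and Lemma~\ref{lem:2} gives $M_u^{uv}\subseteq M_w^{wx}$. Now fix a subtree $T_z^{yz}$ branching out from $P_{u,x}$ and let $\ell$ be any label occurring in it. Because $y$ is an interior node of the path, $T_z^{yz}$ lies inside both $T_v^{uv}$ and $T_w^{wx}$, so $\ell\in T_v^{uv}$ and $\ell\in T_w^{wx}$. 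The latter forces $\ell\notin M_x^{wx}=M_v^{uv}$, and together with $\ell\in T_v^{uv}$ this leaves only $\ell\in C^{uv}$. Thus every branch label also occurs in $T_u^{uv}$; this single observation drives both parts.

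For part (i), consider an internal edge $(a,b)$ of $T_z^{yz}$, oriented so that $T_b^{ab}\subseteq T_z^{yz}$ is the piece cut off inside the branch. Every label of $T_b^{ab}$ is a branch label, hence lies in $C^{uv}$ and so occurs in $T_u^{uv}\subseteq T_a^{ab}$; therefore $M_b^{ab}=\emptyset$ and $\Delta(a,b)=\emptyset$. An edge resolving no quartet can be contracted without altering the bipartition, and hence $\Delta$, of any other edge, so $(a,b)$ is contractible. This is the easy half, and it uses only the containment.

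For part (ii) I first upgrade the containment $M_u^{uv}\subseteq M_w^{wx}$ to equality. If some $\ell\in M_w^{wx}\setminus M_u^{uv}$ existed, then, picking a second label of $M_w^{wx}$ and two labels of $M_x^{wx}=M_v^{uv}$ (such labels exist because $\Delta(w,x)\neq\emptyset$), one builds a quartet in $\Delta(w,x)\setminus\Delta(u,v)$, contradicting $\Delta(u,v)=\Delta(w,x)$. Hence $M_u^{uv}=M_w^{wx}$ and $M_v^{uv}=M_x^{wx}$, so $C^{uv}=C^{wx}$; in particular every branch label occurs in $T_u^{uv}$ \emph{and} in $T_x^{wx}$, i.e.\ at both ends of $P_{u,x}$, outside the branch. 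Now take a leaf $s$ of $T_z^{yz}$ with label $m$ and prune it. The only label whose occurrences change is $m$, and the change only deletes branch occurrences. For every edge $e'$ lying outside the branch the whole branch sits on one side of $e'$, while $m$ still occurs in $T_u^{uv}$ and $T_x^{wx}$; a short case check --- according to whether $e'$ is a path edge, lies in $T_u^{uv}$, lies in $T_x^{wx}$, or lies in another branch --- shows that $m$ remains present on both sides of $e'$, so the induced bipartition and thus $\Delta(e')$ are unchanged. Since every quartet of $\Info(T)$ is resolved by such an outside edge (branch edges resolve none), no quartet is lost; and since the branch edges --- including any edge produced when a degree-two node is suppressed --- still have empty $\Delta$ by the argument of part (i), no quartet is gained. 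Hence $s$ is prunable, and iterating removes the whole subtree.

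The main obstacle is the ``no new quartet'' bookkeeping in part (ii): one must certify that pruning a branch leaf neither destroys a surviving conflict-free quartet nor manufactures a spurious one. Both failure modes are ruled out precisely because each branch label persists at both ends of $P_{u,x}$, and this is where the full equality $\Delta(u,v)=\Delta(w,x)$ --- rather than the mere containment used in part (i) --- is essential, via $C^{uv}=C^{wx}$. I would be especially careful with the case in which suppressing a degree-two node merges two edges, verifying that the merged edge is still non-informative so that it contributes nothing new.
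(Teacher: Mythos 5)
Your proof is correct and follows essentially the same route as the paper's: both arguments reduce everything to the label-set identities $M_v^{uv}=M_x^{wx}$ (and, under equality of information content, $M_u^{uv}=M_w^{wx}$), conclude that every edge of a branching subtree is uninformative because $T_u^{uv}$ and $T_x^{wx}$ lie entirely on its path side, and observe that every branch label persists at both ends of $P_{u,x}$, so pruning a branch leaf changes no other edge's $\Delta$. One small correction to your final point: when the suppressed degree-two node is the attachment node on the path itself (e.g.\ after the last leaf of the branch is removed), the merged edge is a path edge and is \emph{informative}, so the right thing to verify there is not that it is non-informative but that its $\Delta$ equals the common $\Delta$ of the two path edges it replaces --- which follows immediately from the equalities $M_{y}^{yz}=M_u^{uv}$ and $M_{z}^{yz}=M_v^{uv}$ you have already established along the path.
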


\section{\label{sec:Algorithm}The Reduction Algorithm}

We now describe a $O(n^2)$ algorithm to compute the MRF of an $n$-leaf \mulT\ $T$.  In the previous section, the MRF was defined as the tree obtained by applying information-preserving pruning and contraction operations to $T$, in any order, until it is no longer possible.  For efficiency, however, the sequence in which these steps are performed is important.  Our algorithm has three distinct phases: a preprocessing step, redundant edge contraction, and pruning of redundant leaves.  We describe these next and then give an example.

\subsection{Preprocessing}

For every edge $(u,v)$ in $T$, we compute $|M_{u}^{uv}|$
and $|M_{v}^{uv}|$. This can be done in $O(n^2)$ time as
follows.  First, traverse subtrees $T_{u}^{uv}$ and $T_{v}^{uv}$ to
count number of distinct labels $n_{u}^{uv}$ and $n_{v}^{uv}$ in each subtree. Then, $|M_{u}^{uv}|=|M|- n_{v}^{uv}$ and $|M_{v}^{uv}|=|M|-n_{u}^{uv}$.  We then contract non-informative edges; i.e., edges $(u,v)$ where $|M_u^{uv}|$ or $|M_v^{uv}|$ is at most one.

\subsection{Edge Contraction and Subtree Pruning}

Next, we repeatedly find pairs of adjacent edges $(u,v)$ and $(v,w)$ such that $\Delta(u,v) \subseteq \Delta(v,w)$ or vice-versa, and contract the less informative of the two.  By Lemmas~\ref{lem:2} and \ref{lem:3}, we can compare $\Delta(u,v)$ and $\Delta(v,w)$ in constant time using the precomputed values of $|M_{u}^{uv}|$ and $|M_{v}^{uv}|$.
Lemma~\ref{lm:prune_n_contract}(i) implies that we should also contract all internal edges incident on $v$ or in the subtrees branching out of $v$.  Further, by Lemma~\ref{lm:prune_n_contract}(ii), if $\Delta(u,v) = \Delta(v,w)$, we can in fact delete these subtrees entirely, since their leaves are prunable. 
Lemma~\ref{lem:4} implies that all such
edges must lie on a path, and hence can be identified in linear
time. The total time for all these operations is linear, since at worst we traverse every edge
twice. 

\subsection{Pruning Redundant Leaves}

The tree that is left at this point has no contractible edges; however,  it can still have prunable leaves.  We first prune any leaf with a label $\ell$ that does not participate
in any resolved quartet.  Such an $\ell$ has the property that for every edge $(u,v)$, $\ell\notin M_{u}^{uv}$ and $\ell\notin M_{v}^{uv}$. 
All such leaves can be found in $O(n^{2})$ time and
$O(n)$ space.  

Next, we consider sets of leaves with the same label $\ell$ that share a common neighboring pendant node.  Such leaves can be found in linear time.  For each such set, we delete all but one element. 

After such leaves are removed, let $T$ be the resulting tree. Now, the only kind of prunable leaf with a given label $\ell$ that might remain are leaves attached to different pendant
nodes. The next result identifies such redundant leaves in $T$.

\begin{lemma}
\label{lm:prunable}
Let $\ell$ be a multiply-occurring label in $T$ and
let $T'$ be the minimal subtree that spans all the leaves labelled by $\ell$.  Then, any leaf in $T$ labeled $\ell$ attached to a
pendant node in $T'$ of degree at least three is prunable.
\end{lemma}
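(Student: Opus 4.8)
The plan is to show directly that pruning the leaf in question leaves the information content unchanged. Call the leaf $x$ (labeled $\ell$), let $p$ be its pendant node in $T'$, and let $\widehat{T}$ be the tree obtained by pruning $x$. The decisive observation is that deleting $x$ removes a single occurrence of $\ell$: for every label $m\neq \ell$ the set of leaves carrying $m$ is untouched, so $m$'s membership in any $M_s^{st}$, $M_t^{st}$, or $C^{st}$ is literally unchanged. Hence I only have to track the status of $\ell$ relative to each edge, and it is enough to prove that $\ell$ never vanishes entirely from either side of any edge when $x$ is deleted.

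First I would record the structural fact that powers everything: since $T'$ is the minimal subtree spanning the $\ell$-leaves, every component of $T'-p$ must contain an $\ell$-leaf (otherwise the edges leading to that component could be dropped from $T'$, contradicting minimality). Because $p$ has degree at least three in $T'$, deleting $p$ from $T'$ produces at least three such components; one of them is the isolated leaf $x$, so at least two of the remaining components each contain an $\ell$-leaf different from $x$, and these lie in at least two distinct neighbor-directions of $p$ in $T$.

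Next I would treat a general edge $e=(s,t)$ with $e\neq px$. Since $x$'s only neighbor is $p$, both $x$ and $p$ lie on the same side of $e$, say the $t$-side. The unique path from $p$ across $e$ leaves $p$ through exactly one of its neighbors, so at least one of the two-or-more non-$x$ branches of $T'$ at $p$ stays on the $t$-side and carries an $\ell$-leaf. Thus the $t$-side still contains an $\ell$-leaf after $x$ is removed, so $C^{st}$, $M_s^{st}$, and $M_t^{st}$ are unchanged and $\Delta(s,t)$ is identical in $T$ and $\widehat{T}$. For the pendant edge $px$ itself, $\ell$ recurs elsewhere, so $M_x^{px}=\emptyset$; the edge is non-informative and its removal costs no quartet.

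The only remaining case, and the one I expect to require the most care, is when $p$ has degree exactly three in $T$, so that pruning $x$ forces $p$ to be suppressed and its two surviving edges $(a,p),(p,b)$ to be merged into one edge $(a,b)$ of $\widehat{T}$. Here the degree-at-least-three hypothesis is used again: both $a$- and $b$-directions are $T'$-branches, so $\ell$ occurs in both the $a$-region $A$ and the $b$-region $B$. Writing out the label sets, the extra copy of $\ell$ contributed by $x$ is common to both sides in $T$, whence $M_a^{ap}=A\setminus B=M_a^{ab}$ and $M_p^{ap}=B\setminus A=M_b^{ab}$, and symmetrically for $(p,b)$; therefore $\Delta(a,p)=\Delta(p,b)=\Delta_{\widehat{T}}(a,b)$. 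Combining this with the general-edge analysis gives $\Info(\widehat{T})\subseteq\Info(T)$ and $\Info(T)\subseteq\Info(\widehat{T})$, so $\Info(T)=\Info(\widehat{T})$ and $x$ is prunable. The suppression bookkeeping is the crux precisely because it is the only place where a new edge is created, and it is exactly the degree-$\ge 3$ assumption on $p$ that makes the merged edge resolve neither more nor fewer quartets.
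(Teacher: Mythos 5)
Your proof is correct and takes essentially the same approach as the paper's: the degree-$\ge 3$ hypothesis on the pendant node in $T'$ guarantees that a leaf labeled $\ell$ survives on that node's side of every edge, so every $M$- and $C$-set (and hence every $\Delta$) is unchanged. You are in fact more thorough than the paper, which silently omits the bookkeeping for the case where the pendant node has degree three in $T$ and must be suppressed after the pruning.
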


Thus, to identify and prune redundant leaf
nodes of the latter type:
\begin{enumerate}
\item For each label $\ell$, consider the subgraph on the leaves labeled
by it.
\item In this subgraph, delete any leaf not attached to a degree 2
pendant node as it is a redundant leaf.
\end{enumerate}

This takes $O(n)$ time per label and $O(n^{2})$
time total.  The space used is $O(n)$. Hence, the overall time and space complexities are $O(n^{2})$ time and $O(n)$, respectively.

The resulting tree has no contractible edges nor prunable leaves.  Therefore, it is the MRF of the orginal \mulT.

\subsection{\label{sec:Example}An Example}

We illustrate the reduction of the unrooted $\mulT$ shown in Fig.~\ref{Flo:appen-eg-input} to its MRF.

%
\begin{figure}
\begin{centering}
\includegraphics[scale=0.43]{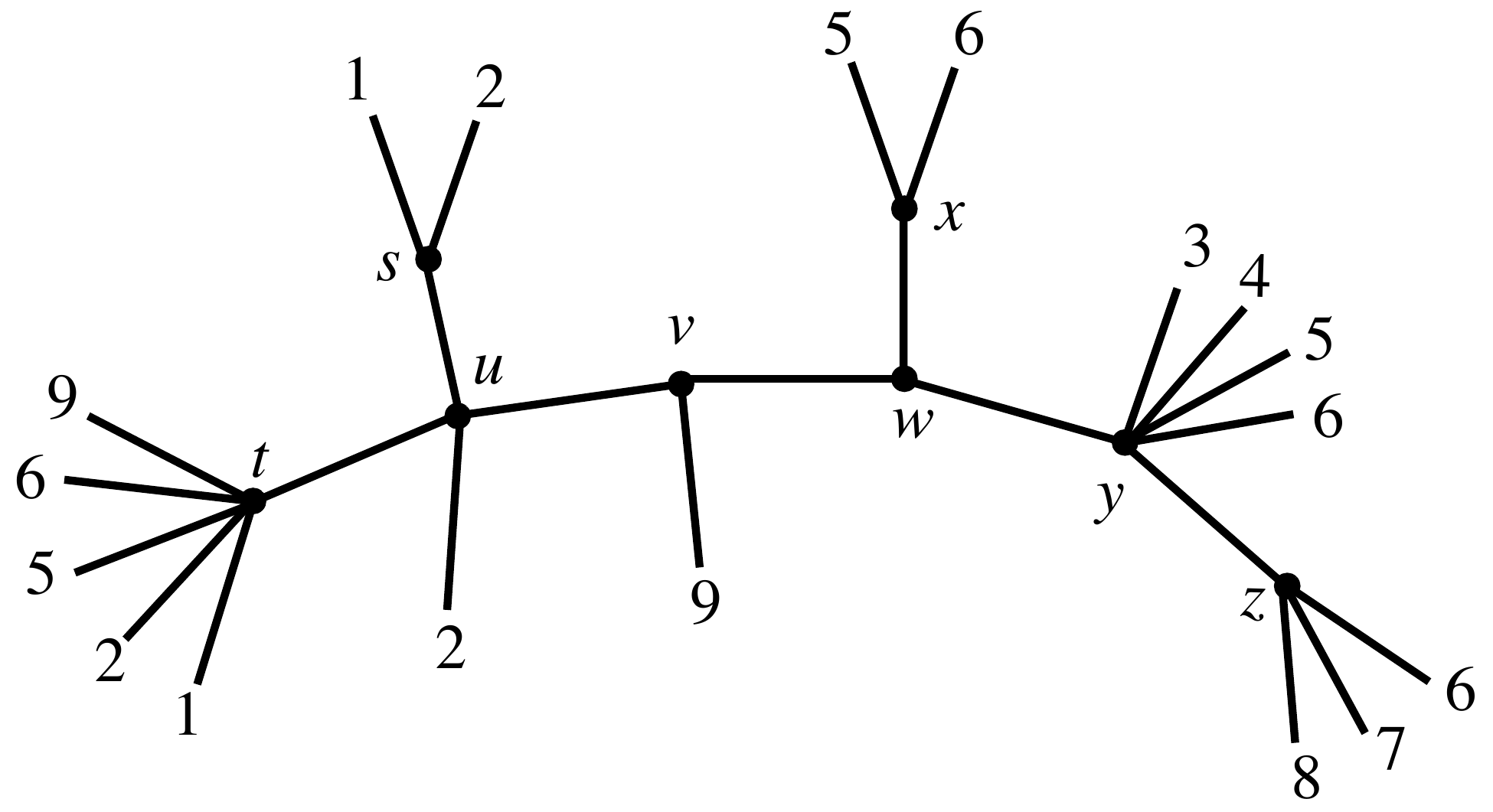}
\par\end{centering}
\vspace{-10pt}

\caption{}

\label{Flo:appen-eg-input}%
\end{figure}

\begin{enumerate}
\item
In the preprocessing step, we find that $M_t^{t u} = \emptyset$, $M_s^{s u} = \emptyset$ and $M_x^{w x} = \emptyset$, so edges $(t,u)$, $(s,u)$ and $(w,x)$ are uninformative.  They are therefore contracted, resulting in the tree shown in Fig.~\ref{Flo:preprocess_step}.

\begin{figure}
\begin{centering}
\includegraphics[scale=0.43]{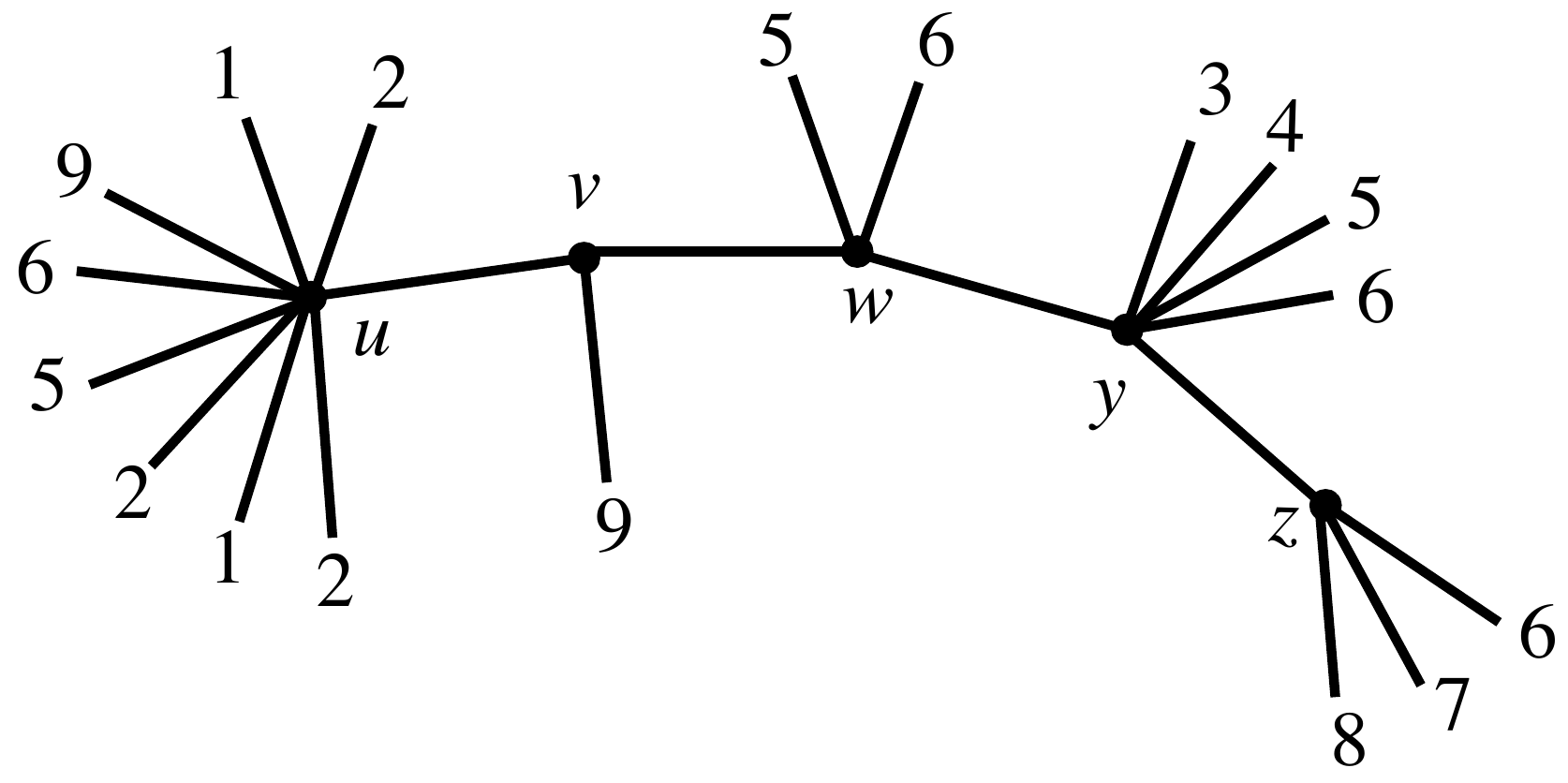}
\par\end{centering}
\vspace{-10pt}

\caption{}

\label{Flo:preprocess_step}%
\end{figure}

\item Since $\Delta(u,v) \subset \Delta(v,w)$, contract $(u,v)$. The
result is shown in Fig.~\ref{Flo:contract_1}.

\begin{figure}
\begin{centering}
\includegraphics[scale=0.43]{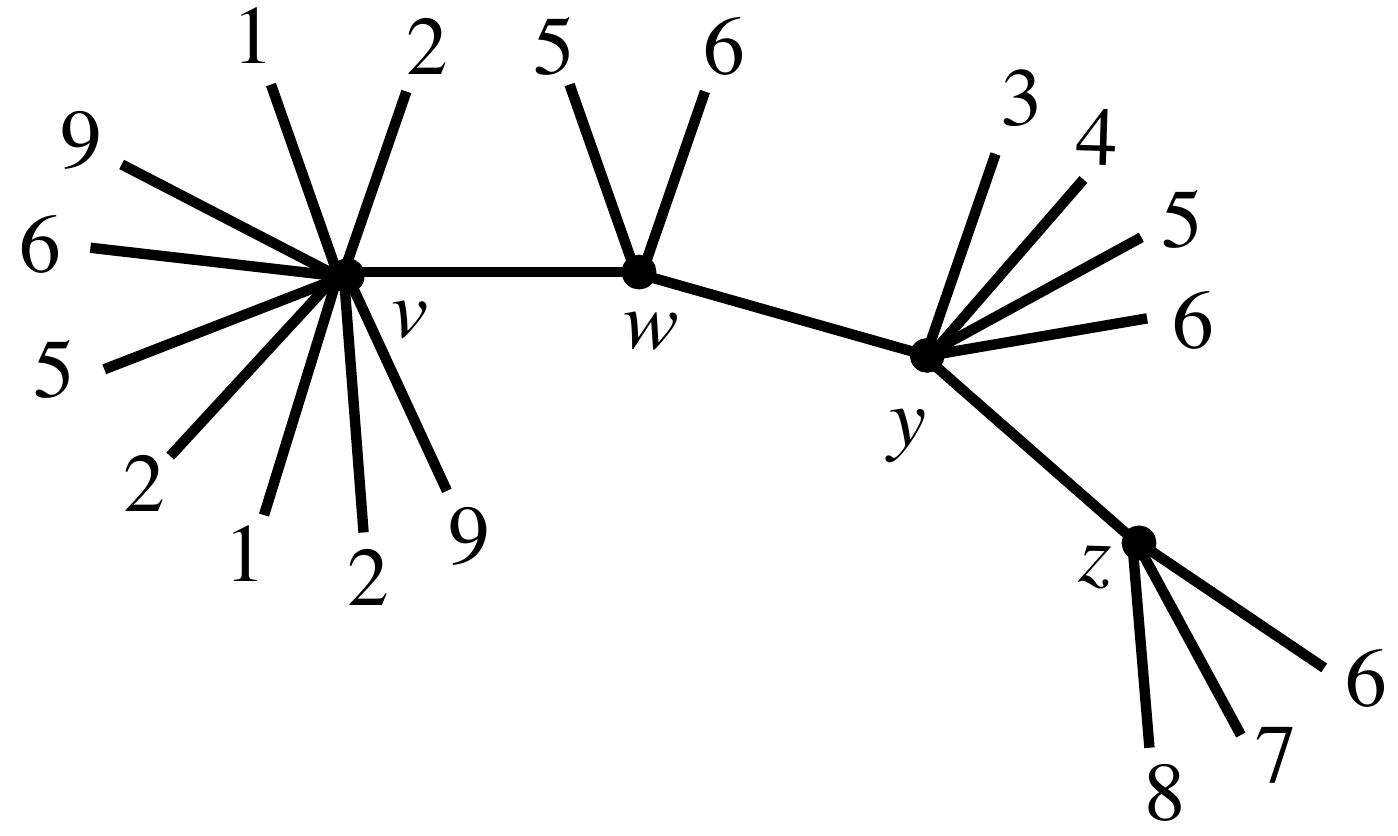}
\par\end{centering}
\vspace{-10pt}

\caption{}

\label{Flo:contract_1}%
\end{figure}
\item Since $\Delta(v,w)=\Delta(w,y)$, delete the subtree branching out at $w$ from the path from $v$ to $y$ and contract $(v,w)$. The
result is shown in Fig.~\ref{Flo:contract_2}.
\begin{figure}
\begin{centering}
\includegraphics[scale=0.43]{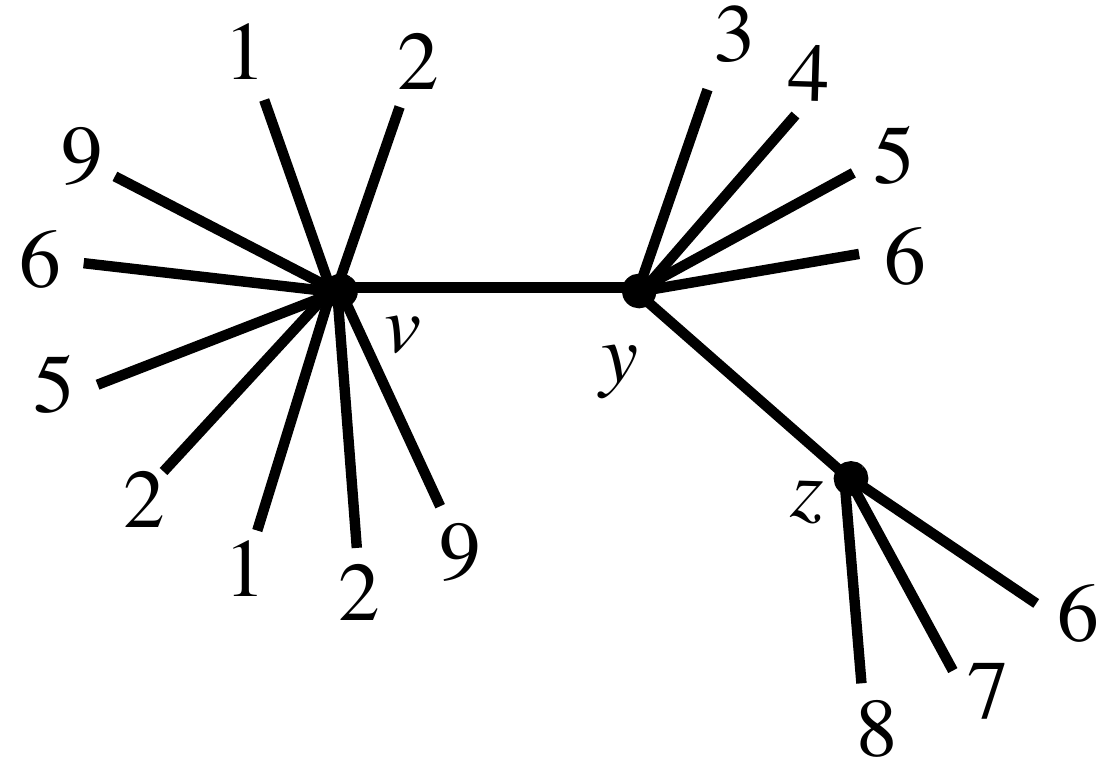}
\par\end{centering}
\vspace{-10pt}

\caption{}

\label{Flo:contract_2}%
\end{figure}

\item Prune taxon $6$, which does not participate in any quartet, and all duplicate taxa at the pendant nodes. The result, shown in Fig.~\ref{Flo:final}, is the MRF of the original tree.
\begin{figure}
\begin{centering}
\includegraphics[width=0.4\linewidth]{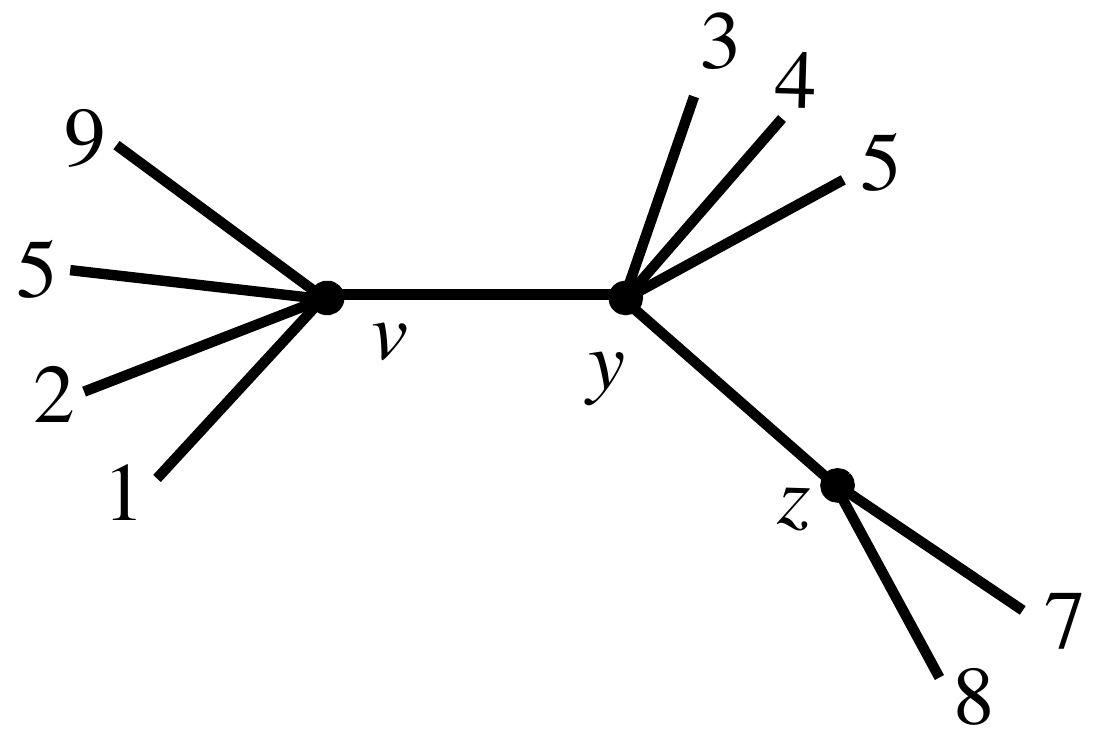}
\par\end{centering}
\vspace{-10pt}

\caption{}

\label{Flo:final}%
\end{figure}

\end{enumerate}

\section{\label{sec:Application}Evaluation}

We implemented our \mulT\ reduction algorithm, as well as a second step that restricts the MRF to the set of labels that appear only once, which yields a singly-labeled
tree.  We tested our two-step program on a set of 110,842
$\mulT$s obtained from the PhyLoTA database [10] (\url{http://phylota.net/}; GenBank eukaryotic nucleotide sequences, release 184, June 2011), which included a broad range of label-set sizes, from 4 to 1500 taxa.

There were 8,741 trees (7.8\%) with essentially no information content; these lost all resolution either when reduced to their MRFs, or in the second step. 
The remaining trees fell into two categories.  Trees in set $A$ had a singly-labeled MRF; 
65,709 trees (59.3\%) were of this kind. Trees in set $B$ were reduced to singly-labeled trees in the second step; 36,392 trees (32.8\%) were of this kind. Reducing a tree to its MRF (step 1),  led to an average taxon loss of 0.83\% of the taxa in the input $\mulT$. The total taxon loss after the second step (reducing the MRFs in set $B$ to singly-labeled trees), averaged 12.81\%. This taxon loss is not trivial, but it is far less than the 41.27\% average loss from the alternative, na\"ive, approach in which all mul-taxa (taxa that label more than one leaf) are removed at the outset. 
Note that, by the definition of MRFs, taxa removed in the first step do not contribute to the information content, since all non-conflicting quartets are preserved. On the other hand, taxa removed in the second step do alter the information content, since each such taxon participated in some non-conflicting quartet. 

Taxon loss is sensitive to the number of total taxa and, especially, mul-taxa, as demonstrated in Figure \ref{Flo:TaxaLoss}. The grey function shows
the percentage of mul-taxa in the original input trees, which is the taxon loss if we had restricted the input $\mulT$s to the set of singly-labeled leaves. The black function shows the percentage of mul-taxa lost after steps 1 and 2 of our reduction procedure. 

In addition to the issue of taxon loss, we investigated the effect of our reduction on edge loss, i.e., the level of resolution within the resulting singly-labeled tree. Input  $\mulT$s were binary and therefore had more nodes than twice the number of taxa (Fig~\ref{Flo:reducedTreeQuality}, solid line), whereas a binary tree on singly labeled taxa would have approximately as many nodes as twice the number of taxa (Fig.~\ref{Flo:reducedTreeQuality}, dashed line). We found that, although there was some edge loss, the number of nodes in the reduced singly-labeled trees (Fig.~\ref{Flo:reducedTreeQuality}, dotted line) corresponded well to the total possible, indicating low levels of edge loss. 
Note that each point on the dotted or solid lines represents an average over all trees with the same number of taxa. 

\begin{figure}[t]
\begin{centering}
\subfloat[Taxon loss in the second step]{\begin{centering}
\includegraphics[width=0.82\columnwidth]{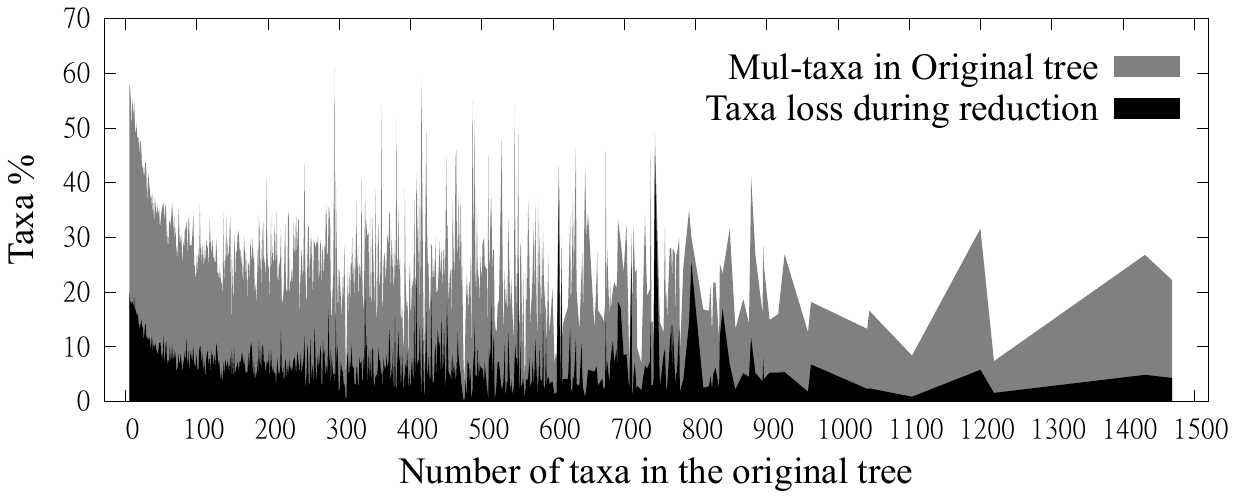}
\par\end{centering}

\label{Flo:TaxaLoss}}
\par\end{centering}

\begin{centering}
\subfloat[Quality of reduced singly-labeled trees]{\begin{centering}
\includegraphics[width=0.82\columnwidth]{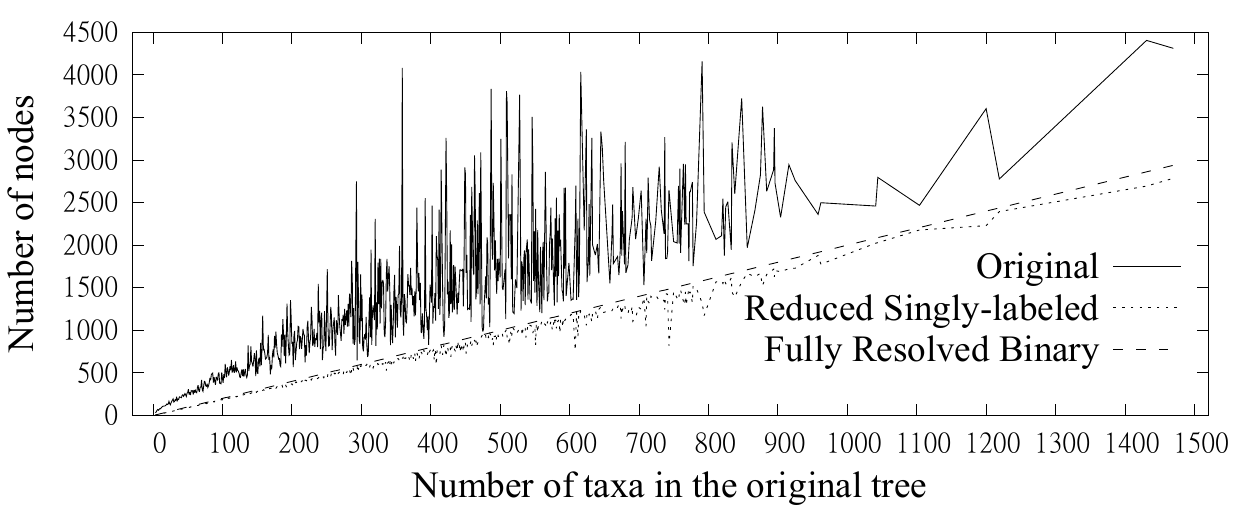}
\par\end{centering}

\label{Flo:reducedTreeQuality}}
\par\end{centering}

\caption{Experimental results}
\label{Flo:MulTreeExp}%
\end{figure}

We have integrated our reduction algorithm into SearchTree  (available at \url{http://searchtree.org/}),
a phylogenetic tree search engine that takes a user-provided list of species names and finds matches with a precomputed collection of phylogenetic trees, more than half of which are MUL-trees, assembled from GenBank sequence data.  The trees returned are ranked by a tree quality criterion that takes into account overlap with the query set, support values for the branches, and degree of resolution.   We have added functionality to provide reduced singly-labeled trees as well as the $\mulT$s based on the full leaf set. 

\section{\label{sec:Conclusion}Conclusion}

We introduced an efficient algorithm to reduce a multi-labeled $\mulT$
to a maximally reduced form with the same information content, defined as the set of non-conflicting quartets it resolves.  We also showed that the information content of a \mulT\ 
uniquely identifies the $\mulT$'s maximally reduced form.
This has potential application in comparing $\mulT$s by significantly
reducing the number of comparisons as well as in extracting species-level information efficiently and conservatively from large sets of trees, irrespective of the underlying cause of multiple labels.  Our algorithm can easily be adapted to work for rooted trees.

Further work investigating the relationship of the MRF to the original tree under various biological circumstances is also underway. We might expect, for example, that well-sampled nuclear gene families reduce to very small MRF trees, and that annotation errors in chloroplast gene sequences (in which we expect little gene duplication), result in relatively large MRF trees.  Comparing the MRF to the original MUL-tree may well provide a method for efficiently assessing and segregating data sets with respect to the causes of multiple labels.

It would be interesting to compare our results with some of the other approaches for reducing $\mulT$s to singly-labeled trees (e.g., \cite{scornavacca2010building}) or, indeed, to evaluate if our method can benefit from being used in conjunction with such approaches.

\section*{Acknowledgements}

We thank Mike Sanderson for helping to motivate this work, for many discussions about the problem formulation, and for our ongoing collaboration in the SearchTree project.  Sylvain Guillemot listened to numerous early versions of our proofs and offered many insightful comments. We also thank the anonymous reviewers for their comments, which helped to improve 
this paper.

\bibliographystyle{abbrv}
\bibliography{IEEEabrv,Mul-treeReduction}

\begin{thebibliography}{10}

\bibitem{BansalBEFB2010}
M.~Bansal, J.~G. Burleigh, O.~Eulenstein, and D.~Fern\'andez-Baca.
\newblock {Robinson-Foulds} supertrees.
\newblock {\em Algorithms for Molecular Biology}, 5(1):18, 2010.

\bibitem{Baum1992Combining}
B.~R. Baum.
\newblock Combining trees as a way of combining data sets for phylogenetic
  inference, and the desirability of combining gene trees.
\newblock {\em Taxon}, 41(1):pp. 3--10, 1992.

\bibitem{de2007supermatrix}
A.~de~Queiroz and J.~Gatesy.
\newblock The supermatrix approach to systematics.
\newblock {\em Trends in Ecology \& Evolution}, 22(1):34--41, 2007.

\bibitem{Fellows2003192}
M.~Fellows, M.~Hallett, and U.~Stege.
\newblock Analogs \& duals of the mast problem for sequences \& trees.
\newblock {\em Journal of Algorithms}, 49(1):192 -- 216, 2003.
\newblock 1998 European Symposium on Algorithms.

\bibitem{ganapathy2006pattern}
G.~Ganapathy, B.~Goodson, R.~Jansen, H.~Le, V.~Ramachandran, and T.~Warnow.
\newblock Pattern identification in biogeography.
\newblock {\em {IEEE/ACM} Trans. Comput. Biol. Bioinformatics}, 3:334--346,
  2006.

\bibitem{Grundt2004695}
H.~Grundt, M.~Popp, C.~Brochmann, and B.~Oxelman.
\newblock Polyploid origins in a circumpolar complex in draba (brassicaceae)
  inferred from cloned nuclear dna sequences and fingerprints.
\newblock {\em Molecular Phylogenetics and Evolution}, 32(3):695 -- 710, 2004.

\bibitem{huber2008complexity}
K.~Huber, M.~Lott, V.~Moulton, and A.~Spillner.
\newblock The complexity of deriving multi-labeled trees from bipartitions.
\newblock {\em Journal of Computational Biology}, 15(6):639--651, 2008.

\bibitem{huber2006phylogenetic}
K.~Huber and V.~Moulton.
\newblock Phylogenetic networks from multi-labelled trees.
\newblock {\em Journal of Mathematical Biology}, 52:613--632, 2006.

\bibitem{huber2010metrics}
K.~Huber, A.~Spillner, R.~Suchecki, and V.~Moulton.
\newblock Metrics on multilabeled trees: Interrelationships and diameter
  bounds.
\newblock {\em Computational Biology and Bioinformatics, IEEE/ACM Transactions
  on}, 8(4):1029 --1040, july-aug. 2011.

\bibitem{Johnson_et_al_2003}
K.~Johnson, R.~Adams, R.~Page, and D.~Clayton.
\newblock When do parasites fail to speciate in response to host speciation?
\newblock {\em Syst. Biol.}, 52:37--47, 2003.

\bibitem{lott2009inferring}
M.~Lott, A.~Spillner, K.~Huber, A.~Petri, B.~Oxelman, and V.~Moulton.
\newblock Inferring polyploid phylogenies from multiply-labeled gene trees.
\newblock {\em BMC evolutionary biology}, 9(1):216, 2009.

\bibitem{marcet2011treeko}
M.~Marcet-Houben and T.~Gabald{\'o}n.
\newblock Treeko: a duplication-aware algorithm for the comparison of
  phylogenetic trees.
\newblock {\em Nucleic Acids Research}, 39:e66, 2011.

\bibitem{Popp2001474}
M.~Popp and B.~Oxelman.
\newblock Inferring the history of the polyploid silene aegaea
  (caryophyllaceae) using plastid and homoeologous nuclear dna sequences.
\newblock {\em Molecular Phylogenetics and Evolution}, 20(3):474 -- 481, 2001.

\bibitem{puigbo2007topd}
P.~Puigb{\`o}, S.~Garcia-Vallv{\'e}, and J.~McInerney.
\newblock Topd/fmts: a new software to compare phylogenetic trees.
\newblock {\em Bioinformatics}, 23(12):1556, 2007.

\bibitem{ragan1992phylogenetic}
M.~Ragan.
\newblock Phylogenetic inference based on matrix representation of trees.
\newblock {\em Molecular phylogenetics and evolution}, 1(1):53--58, 1992.

\bibitem{RasmussenKellis2012}
M.~Rasmussen and M.~Kellis.
\newblock Unified modeling of gene duplication, loss, and coalescence using a
  locus tree.
\newblock {\em Genome Research}, 22:755--765, 2012.

\bibitem{sanderson2008phylota}
M.~Sanderson, D.~Boss, D.~Chen, K.~Cranston, and A.~Wehe.
\newblock The {PhyLoTA} browser: processing {GenBank} for molecular
  phylogenetics research.
\newblock {\em Systematic Biology}, 57(3):335, 2008.

\bibitem{scornavacca2010building}
C.~Scornavacca, V.~Berry, and V.~Ranwez.
\newblock Building species trees from larger parts of phylogenomic databases.
\newblock {\em Information and Computation}, 209(3):590 -- 605, 2011.
\newblock Special Issue: 3rd International Conference on Language and Automata
  Theory and Applications (LATA 2009).

\bibitem{sempleSteelPhylogenetics}
C.~Semple and M.~Steel.
\newblock {\em {Phylogenetics}}.
\newblock Oxford University Press, Oxford, 2003.

\bibitem{steel1992complexity}
M.~Steel.
\newblock {The complexity of reconstructing trees from qualitative characters
  and subtrees}.
\newblock {\em Journal of Classification}, 9(1):91--116, 1992.

\bibitem{swenson2012superfine}
M.~Swenson, R.~Suri, C.~Linder, and T.~Warnow.
\newblock Superfine: {f}ast and accurate supertree estimation.
\newblock {\em Systematic biology}, 61(2):214--227, 2012.

\bibitem{Wiens1995Combining}
J.~J. Wiens and T.~W. Reeder.
\newblock Combining data sets with different numbers of taxa for phylogenetic
  analysis.
\newblock {\em Systematic Biology}, 44(4):pp. 548--558, 1995.

\end{thebibliography}

\section*{Appendix: Proofs}

\subsection*{Proofs for Section~\ref{sec:Preliminaries-and-Definitions}}

\paragraph{Proof of Theorem~\ref{thm:singly_labeled}.}
Repeat the following step until $T$ has no multiply-occurring labels. 
Pick any multiply-occurring label $\ell$
in $T$, select an arbitrary leaf labeled by $\ell$, and relabel every other leaf  labeled by $\ell$, by a new, unique, label. 
The resulting tree $T'$ is singly labeled, and all labels of $T$ are also present in $T'$. Consider
a quartet $ab|cd$ in $T$, that is resolved by edge $(u,v)$.  Assume
that $\{a,b\}\in M_{u}^{uv}$ and $\{c,d\}\in M_{v}^{uv}$. Thus, $T_{u}^{uv}$ contains all the occurrences of label $a$. Clearly,
this also holds for the only occurrence of $a$ in $T'$. Similar
statements can be made about labels $b$, $c$, and $d$.
Thus, the quartet $ab|cd$ is resolved by edge $(u,v)$
in $T'$, and, hence, $T'$ displays all quartets of
$T$.  \inputencoding{latin1}{\hfill{}}\inputencoding{latin9}\qed

\paragraph{Proof of Lemma~\ref{lem:2}.}
Refer to Fig.~\ref{Flo:ProofLemma1}.  
Since $T_{u}^{uv}$ is a subtree of $T_{w}^{wx}$, $M_{u}^{uv}\subseteq M_{w}^{wx}$
by definition of $M_{u}^{uv}$. Thus, if $|M_{u}^{uv}|=|M_{w}^{wx}|$, we must have $M_{w}^{wx}=M_{u}^{uv}$ and, if $|M_{u}^{uv}| \neq |M_{w}^{wx}|$, we must have $M_{u}^{uv}\subset M_{w}^{wx}$.\hfill{} $\qed$

\paragraph{Proof of Lemma~\ref{lem:3}.}
\emph{(Only if)}
Suppose $\Delta(u,v)\subseteq\Delta(w,x)$; therefore, $M_{v}^{uv}\subseteq M_{x}^{wx}$.  By definition, $M_{v}^{uv}\supseteq M_{x}^{wx}$; hence, $M_{v}^{uv}=M_{x}^{wx}$. 

\emph{(If)} Suppose $M_{v}^{uv}=M_{x}^{wx}$.  By definition, $M_{u}^{uv}\subseteq M_{w}^{wx}$, which 
implies that $\Delta(u,v)\subseteq\Delta(w,x)$. \hfill{} $\qed$

\paragraph{Proof of Lemma~\ref{lem:4}.}
By Lemma \ref{lem:3}, since
$\Delta(u,v)\subseteq\Delta(w,x)$, we have $M_{v}^{uv}=M_{x}^{wx}$.
Now consider an edge $(y,z)$ on $P_{u,x}$. By definition $M_{v}^{uv}\supseteq M_{z}^{yz}\supseteq M_{x}^{wx}$.  But $M_{v}^{uv}=M_{x}^{wx}$, therefore $M_{v}^{uv}=M_{z}^{yz}=M_{x}^{wx}$.
By definition $M_{u}^{uv}\subseteq M_{y}^{yz}\subseteq M_{w}^{wx}$.
Hence, by Lemma \ref{lem:3}, $\Delta(u,v)\subseteq\Delta(y,z)\subseteq\Delta(w,x)$. \hfill$\qed$

\subsection*{Proofs for Section~\ref{sec:MRF}}

\paragraph{Proof of Theorem \ref{thm:5}.}
We rely on two facts.  First, every internal node in the tree has
degree at least three.  Second, every internal edge
in the tree resolves a quartet; otherwise, the edge would be contractible and the tree would not be maximally reduced. 

Consider any edge $(u,v)$ in the tree. To prove that $(u,v)$ resolves a 
quartet not resolved by any other edge, we need to show that there exists a quartet $ab|cd$ of the form 
shown in Fig.~\ref{Flo:uniqueQuartet}.
First, we describe how to select leaves
$a$ and $b$. Consider the following cases: 

\begin{case}
$u$ has at least two neighbors $i$ and $j$, apart from $v$, that
are internal nodes. Then, we select any $a\in M_{i}^{ui}$
and any $b\in M_{j}^{uj}$.
\end{case}

\begin{case}
$u$ has only one neighbor $i \neq v$ that is an internal
node. Then, at least one of  $u$'s neighboring leaves
must participate in a quartet that $(u,v)$ resolves. Without such
a leaf, $(u,v)$ would resolve the same set of quartets as $(u,i)$, so one of these two edges would be contractible, contradicting the assumption that the tree is maximally reduced. We select this leaf as $b$ and we select
any $a\in M_{i}^{ui}$.
\end{case}

\begin{case}
All neighbors of $u$, except $v$, are leaves. Then, at
least two of its neighbors must participate in a quartet, because $(u,v)$
must resolve a quartet. We select the two neighbors as $a$ and $b$. 
\end{case}
In every case, we can select the desired leaves $a$ and $b$. By a similar argument, we can also select the desired $c$ and $d$. This proves the existence of the desired quartet $ab|cd$. Therefore, each internal edge of $T$ uniquely
resolve a quartet. 
\hfill{}\qed

\paragraph{Proof of Theorem \ref{thm:10}.}
We need two lemmas.

\begin{lemma}
\label{lem:8}There is a bijection $\phi$ between the respective sets of internal edges of $T$ and $T'$  with the following property.  Let $(u,v)$ be an internal edge in $T$ and let $(u',v') = \phi(u,v)$.  Then, $M_u^{uv} = M_{u'}^{u'v'}$ and $M_v^{uv} = M_{v'}^{u'v'}$.  Therefore, $\Delta(u,v)=\Delta(u',v')$.
\end{lemma}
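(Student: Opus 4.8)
The plan is to build the bijection indirectly, first isolating a purely order-theoretic fact about the internal edges of a single maximally reduced tree and then feeding the two trees into it symmetrically. The fact is: for internal edges $e$ and $e^{*}$ of a maximally reduced tree, $\Delta(e)\subseteq\Delta(e^{*})$ forces $e=e^{*}$. This is immediate from Theorem~\ref{thm:5}: $e$ resolves a quartet $q_e$ resolved by no other edge, so if $\Delta(e)\subseteq\Delta(e^{*})$ then $e^{*}$ also resolves $q_e$, whence $e^{*}=e$. The same holds in $T'$. I will use this repeatedly to upgrade containments to equalities.

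Granting the key existence claim $(\star)$ --- that for every internal edge $e$ of $T$ there is an internal edge $f$ of $T'$ with $\Delta(e)\subseteq\Delta(f)$, and symmetrically with the roles of $T,T'$ exchanged --- the lemma falls out. Given $e$, pick $f$ with $\Delta(e)\subseteq\Delta(f)$; applying $(\star)$ to $f$ yields an internal edge $e'$ of $T$ with $\Delta(f)\subseteq\Delta(e')$, so $\Delta(e)\subseteq\Delta(e')$ and hence $e=e'$ by the order-theoretic fact. Then $\Delta(e)\subseteq\Delta(f)\subseteq\Delta(e')=\Delta(e)$ gives $\Delta(e)=\Delta(f)$. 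Setting $\phi(e)=f$ is well defined and injective (any two edges with equal $\Delta$ coincide, in either tree), and surjective by running the construction from $T'$; thus $\phi$ is a bijection with $\Delta(e)=\Delta(\phi(e))$. Finally, since $\Delta(u,v)$ is exactly the set of bipartitions $pq|rs$ with $\{p,q\}\subseteq M_u^{uv}$ and $\{r,s\}\subseteq M_v^{uv}$, and both parts contain at least two labels for an informative edge, $\Delta(e)$ determines the unordered pair $\{M_u^{uv},M_v^{uv}\}$; matching orientations yields $M_u^{uv}=M_{u'}^{u'v'}$ and $M_v^{uv}=M_{v'}^{u'v'}$, as required.

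To prove $(\star)$ I would start from the private quartet $q_e=ab|cd$ of $e=(u,v)$ (Theorem~\ref{thm:5}, in the form of Fig.~\ref{Flo:uniqueQuartet}), with $\{a,b\}\subseteq M_u^{uv}$ and $\{c,d\}\subseteq M_v^{uv}$. Since $q_e\in\Info(T)=\Info(T')$, some edge $f=(u',v')$ of $T'$ resolves it; this $f$ is internal, because a pendant edge $(p,w)$ with $w$ a leaf has $|M_w^{pw}|\le 1$ and so resolves no quartet. Orient $f$ so that $\{a,b\}\subseteq M_{u'}^{u'v'}$ and $\{c,d\}\subseteq M_{v'}^{u'v'}$. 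The goal is then $M_u^{uv}\subseteq M_{u'}^{u'v'}$ and $M_v^{uv}\subseteq M_{v'}^{u'v'}$, i.e.\ that $f$ resolves all of $\Delta(e)$. The leverage is that $T'$ resolves not merely $q_e$ but the entire complete-bipartite family $\{pq|rs:\{p,q\}\subseteq M_u^{uv},\ \{r,s\}\subseteq M_v^{uv}\}=\Delta(e)$, all of which lies in $\Info(T)=\Info(T')$.

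The main obstacle is exactly here, and it stems from multiply-occurring labels: a priori a label $x\in M_u^{uv}$ could lie entirely on the $v'$-side of $f$, or have copies on both sides (so $x\in C^{u'v'}$), in which case $f$ would miss some quartet of $\Delta(e)$. The ``entirely on the wrong side'' case I expect to dispatch cleanly: if $x\in M_{v'}^{u'v'}$ then $f$ resolves $xc|ab$, hence $xc|ab\in\Info(T)$, yet $T$ cannot resolve it, since all copies of $x,a,b$ lie on the $u$-side of $e$ while $c$ lies on the $v$-side, and $a,b$ sit in \emph{different} subtrees off $u$ (by the form of $q_e$), so no edge of $T$ can place $\{a,b\}$ together opposite $\{x,c\}$. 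The genuinely delicate case is the straddling one, which I would rule out by playing the forced quartets $xa|cd,\ xb|cd,\dots$ off against the positions of $c,d$ fixed by $f$, invoking maximal reducedness of $T'$ (via Theorem~\ref{thm:5}) to forbid the extra copies needed to straddle without either creating a quartet absent from $\Info(T)$ or contradicting the privacy of some quartet in $T'$. Pinning down this straddling case is the crux; once $(\star)$ is established, the reduction above completes the proof.
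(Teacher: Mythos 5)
Your reduction of the lemma to the existence claim $(\star)$ is sound: the ``order-theoretic fact'' that $\Delta(e)\subseteq\Delta(e^{*})$ forces $e=e^{*}$ in a maximally reduced tree is an immediate and correct consequence of Theorem~\ref{thm:5}, the back-and-forth argument deriving the bijection from $(\star)$ is valid, and recovering $\{M_u^{uv},M_v^{uv}\}$ from $\Delta(u,v)$ works since both sides of an informative edge carry at least two labels. Your disposal of the ``entirely on the wrong side'' case is also correct: if $x\in M_u^{uv}$ but $x\in M_{v'}^{u'v'}$, then $T'$ resolves $ab\,|\,xc$ while no edge of $T$ can separate $\{a,b\}$ from $\{x,c\}$, because every connected piece of $T$ avoiding $u$ lies inside a single subtree hanging off $u$ or inside $T_v^{uv}$, and so cannot contain copies of both $x$ and $c$ while excluding $a$ and $b$.

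The genuine gap is the straddling case, which you explicitly leave open, and it is precisely where the substance of the paper's proof lives. The paper closes it by first proving a \emph{uniqueness} statement: the private quartet $ab|cd$ of $(u,v)$ is resolved by exactly one edge of $T'$. (If two edges resolved it, all resolving edges would form a path $(u',x',\dots,w',v')$; applying Theorem~\ref{thm:5} to the maximally reduced tree $T'$ at the extremal edge $(w',v')$ produces a label $\ell$ branching off that path with $\ell\in M_{v}^{uv}$ in $T$, so that $T'$ resolves $a\ell|cd$ while $T$ resolves the quartet on $\{a,\ell,c,d\}$ differently, contradicting $\Info(T)=\Info(T')$.) Uniqueness is exactly what kills straddling: if $x\in M_u^{uv}$ had copies on both sides of $(u',v')$, the quartet $ax|cd\in\Info(T)=\Info(T')$ would have to be resolved by some edge $g'$ of $T'$ having \emph{both} $u'$ and $v'$ on its $\{a,x\}$-side; since all copies of $b$ lie in $T'^{u'v'}_{u'}$ and all copies of $c,d$ in $T'^{u'v'}_{v'}$, this $g'$ would also resolve $ab|cd$, giving a second resolving edge --- contradiction. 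Without some such uniqueness (or an equivalent device), your plan of ``playing the forced quartets off against the positions of $c,d$'' does not yet constitute a proof: a label in $C^{u'v'}$ is perfectly consistent with every individual quartet $xa|cd$, $xb|cd$ being resolved elsewhere in $T'$, so the contradiction has to be manufactured globally, not quartet by quartet. As it stands the proposal is an incomplete proof with the hardest step missing.
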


\begin{wrapfigure}{hr}{0.40\textwidth}
\vspace{-35pt}
\begin{center}

\includegraphics[width=0.35\columnwidth]{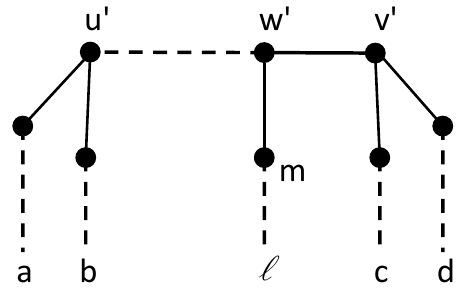}

\par\end{center}
\vspace{-20pt}
\caption{}
\vspace{-10pt}
\label{Flo:bijectionProof_2}

\vspace{-10pt}
\end{wrapfigure}

\noindent \emph{Proof.}  Consider an edge $(u,v)$ in $T$. By Theorem \ref{thm:5}, $(u,v)$ must resolve a quartet $ab|cd$ not resolved by any other edge as shown in Fig.~\ref{Flo:uniqueQuartet}. 
We claim that this quartet must be resolved uniquely by
an edge $(u',v')$ in $T'$. Suppose not. Using arguments similar to those in the proof of Lemma
\ref{lem:4}, we can show that all edges that resolve $ab|cd$ in $T'$ form a path $(u', x', \ldots, w', v')$, where possibly $x' = w'$, as shown in Fig.~\ref{Flo:bijectionProof_2}. Here,
$\{a,b\}\subseteq M_{u'}^{u' x'}$ and $\{c,d\}\subseteq M_{v'}^{w'v'}$.

Since $(w',v')$ resolves a quartet not resolved by any other edge, by Theorem \ref{thm:5}
there exists a label $\ell$ as shown, where $\ell \in M_{m}^{w'm}$.
Since $ab|\ell d$ is a quartet in $T'$ and $\Info(T)=\Info(T)$, it must be true that
$\ell \in M_{v}^{u v}$ in $T$. Clearly, $T$ does not resolve the quartet on 
$\left\{ a,\ell,d,c\right\} $ in the same way, $a\ell|cd$, as $T'$. This contradicts the assumption that $\Info(T)=\Info(T')$.
Thus, $(u',v')$ must be an edge. Moreover, only one such edge
exists in $T'$ as it uniquely resolves the quartet $ab|cd$.

Now consider any label $f\in M_{u}^{uv}$ such that $f\notin\left\{ a,b,c,d\right\} $.
Label $f$ must be in $M_{u'}^{u' v'}$; otherwise, $T$ and $T'$ would resolve the quartet $\left\{ a,f,c,d\right\}$ differently. Similarly, any such $f\in M_{u'}^{u' v'}$ must
be in $M_{u}^{uv}$ as well. Thus $M_{u}^{u v}=M_{u'}^{u' v'}$.
In the same way, we can prove that $M_{v}^{u v}=M_{v'}^{u' v'}$. Thus, $\Delta(u,v)=\Delta(u',v')$.

We have shown that there is a one-to-one mapping $\phi$ from edges in $T$ to edges of $T'$ such that $\Delta(e)=\Delta(\phi(e))$.  To complete the proof, we show that $\phi$ is onto.  Suppose that for some edge $e'$ in $T'$ there is no edge $e$ in $T$ such that $\phi(e) = e'$. But then $e'$ must resolve a quartet not resolved by any other edge in $T'$. This quartet cannot be in $\Info(T)$, contradicting the assumption that $\Info(T)=\Info(T')$.
\hfill{} $\qed$ \bigskip{}

Let $\phi$ be the bijection between the edge sets of $T$ and $T'$ from the preceding lemma.

\begin{lemma}
\label{lem:9} Let $(u,v)$ and $(v,x)$ be any two neighboring internal edges
in $T$, and let $(p,q) = \phi(u,v)$ and $(r,s) = \phi(v,x)$ be the corresponding edges in $T'$ such that $M_{u}^{u v}=M_{p}^{p q}$ and $M_{v}^{v x}=M_{r}^{r s}$.  Then,  $(p,q)$ and $(r,s)$ are neighbors in $T'$ with $q = r$.\end{lemma}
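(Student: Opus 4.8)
Write $A=M_u^{uv}=M_p^{pq}$ and $B=M_x^{vx}=M_s^{rs}$; by Lemma~\ref{lem:8} the remaining sides match too, so $M_q^{pq}=M_v^{uv}$ and $M_r^{rs}=M_v^{vx}$. Let $U,X,Y$ be the components of $T-v$ containing $u$, containing $x$, and the union of the rest (the subtrees at $v$'s other neighbors), respectively; since $v$ is internal, $Y\neq\emptyset$. The fact I will lean on is that a label lies in $M_v^{uv}\cap M_v^{vx}$ \emph{iff} it occurs only in $Y$. To fix orientations in $T'$, note that in $T$ the region $U$ lies on the $v$-side of $(v,x)$ and $X$ on the $v$-side of $(u,v)$, so $A\subseteq M_v^{vx}$ and $B\subseteq M_v^{uv}$; transporting through $\phi$ gives $M_p^{pq}\subseteq M_r^{rs}$ and $M_s^{rs}\subseteq M_q^{pq}$. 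By the path-orientation reasoning behind Lemma~\ref{lem:2}, these containments force $(p,q)$ and $(r,s)$ onto a common path whose extreme vertices are $p$ and $s$, i.e.\ $q$ and $r$ face each other. It therefore remains only to show the subpath from $q$ to $r$ is a single vertex.

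\textbf{Producing two middle labels.} Suppose $q\neq r$, and let $f=(q,q_1)$ be the first edge of the $q$--$r$ subpath (with $q_1$ toward $r$); then $f$ is an internal edge of $T'$ distinct from $(p,q)$ and $(r,s)$. The crux is to examine the private quartet $\alpha\beta\mid\gamma\delta$ that Theorem~\ref{thm:5} attaches to $f$. Its two ``$q$-side'' labels come from two distinct branches at $q$ that avoid $q_1$; only one such branch (the one toward $p$) carries the labels of $A$, so at least one of $\alpha,\beta$ -- call it $\beta$ -- occurs only strictly between the two edges. Hence $\beta\in M_q^{pq}=M_v^{uv}$ and $\beta\in M_r^{rs}=M_v^{vx}$, so by the fact above $\beta$ is a \emph{$Y$-label} in $T$. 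Symmetrically, the two ``$q_1$-side'' labels come from two distinct branches at $q_1$ avoiding $q$, only one of which (the one toward $r$) is on the main path, so at least one of $\gamma,\delta$ -- call it $\delta$ -- is also a $Y$-label; note $\beta\neq\delta$.

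\textbf{The contradiction.} Put $e=(e_1,e_2)=\phi^{-1}(f)$, so $\Delta(e)=\Delta(f)$ and, with the endpoints named to match $\phi$, we have $\{\alpha,\beta\}\subseteq M_{e_1}^{e_1e_2}$, $\{\gamma,\delta\}\subseteq M_{e_2}^{e_1e_2}$, and also $A\subseteq M_{e_1}^{e_1e_2}$, $B\subseteq M_{e_2}^{e_1e_2}$. Read inside $T$: since $A,B\neq\emptyset$, the edge $e$ separates every occurrence of $A$ from every occurrence of $B$, so it lies on a path from $U$ to $X$, which runs through $v$ via the edges $(u,v)$ and $(v,x)$. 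But $e$ also separates the $Y$-label $\beta$ from the $Y$-label $\delta$, so it lies on a path joining two leaves of $Y$. One of these paths is confined to the subtree $U\cup\{v\}\cup X$ and the other to $Y\cup\{v\}$; as these subtrees meet only in the vertex $v$, they share no edge, so no single edge $e$ can lie on both. This contradiction shows no such $f$ exists, hence $q=r$ and $(p,q),(r,s)$ are neighbors.

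\textbf{Main obstacle.} The delicate step is the one forcing $e$ simultaneously onto two edge-disjoint paths. A priori the pulled-back edge $e$ could sit buried inside $U$ or inside $X$: the containments $A\subseteq M_{e_1}^{e_1e_2}$ and $B\subseteq M_{e_2}^{e_1e_2}$ by themselves do not exclude this, since an interior edge can still separate all of $A$ from all of $B$. What closes the gap is extracting, via Theorem~\ref{thm:5}, a genuine $Y$-label on \emph{each} side of $f$ -- information an edge interior to $U$ or $X$ cannot reproduce, because such an edge keeps all of $Y$ on one side. Verifying that Theorem~\ref{thm:5} really yields these middle labels (handling the case where a branch at $q$ or $q_1$ is a single leaf) is where the argument must be carried out with care.
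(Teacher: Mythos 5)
Your proof is correct, and its first half coincides with the paper's: you derive $M_p^{pq}\subseteq M_r^{rs}$ and $M_s^{rs}\subseteq M_q^{pq}$ and conclude that $(p,q)$ and $(r,s)$ must lie on a path of the form $(p,q,\dots,r,s)$. (The paper gets these containments \emph{strict}, via Theorem~\ref{thm:5} and Lemma~\ref{lem:3} applied to the adjacent edges $(u,v)$ and $(v,x)$; it is worth invoking that strictness explicitly, since it is what rules out the degenerate orientation $(q,p,\dots,s,r)$, which your non-strict containments alone do not quite exclude.) Where you genuinely diverge is the contradiction for $q\neq r$. The paper takes the intermediate edge $(t,r)$, pulls it back through $\phi^{-1}$, and asserts that the resulting strict chains $M_u^{uv}\subset M_z^{zw}\subset M_v^{vx}$ and $M_v^{uv}\supset M_w^{zw}\supset M_x^{vx}$ cannot be realized by any edge of $T$, since $(u,v)$ and $(v,x)$ are adjacent and no edge of $T$ sits strictly between them. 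You instead apply Theorem~\ref{thm:5} to the intermediate edge $f=(q,q_1)$ of $T'$, extract from its private quartet two labels $\beta,\delta$ that must be confined to the subtrees of $T$ hanging off $v$ away from both $u$ and $x$, and observe that $\phi^{-1}(f)$ would then have to lie simultaneously on a $U$-to-$X$ path and on a path between two leaves of $Y$, which are edge-disjoint. This is essentially the same device the paper deploys inside the proof of Lemma~\ref{lem:8} (the label $\ell$), transplanted here. Both arguments are sound; yours makes fully explicit the step the paper compresses into ``cannot be simultaneously true irrespective of the position of $(z,w)$,'' at the cost of one more appeal to Theorem~\ref{thm:5}, and your closing caveat about single-leaf branches at $q$ or $q_1$ is handled correctly by the canonical form of the private quartet noted after Theorem~\ref{thm:5}.
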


\begin{proof}
Since $(u,v)$ and $(v,x)$ are neighbors, and each resolves a
quartet that is not resolved by the other, $M_{u}^{uv}\subset M_{v}^{v x}$
and $M_{v}^{uv}\supset M_{x}^{v x}$. By Lemma \ref{lem:8}, this implies that $M_{p}^{pq}\subset M_{r}^{rs}$
and $M_{q}^{pq}\supset M_{s}^{rs}$. Thus,
the only way $(p,q)$ and $(r,s)$ can exist in $T'$
is as part of the path $P_{p,s} = (p,q, \dots , r, s)$. If $q\neq r$,
then consider the edge $(t,r)$ on $P_{ps}$ such that $p$ is closer to $t$ than to $r$.
Then, the following must hold:

\begin{equation}
M_{p}^{pq}\subset M_{t}^{tr}\subset M_{r}^{rs}\label{eq:lemma9-1}\end{equation}
and
\begin{equation}
M_{q}^{pq}\supset M_{r}^{tr}\supset M_{s}^{rs}\label{eq:lemman9-2}\end{equation}

Let $(z,w) = \phi^{-1}(t,r)$ be the edge in $T$ corresponding to $(t,r)$.
Irrespective of the position of $(z,w)$ in $T$, 
(\ref{eq:lemma9-1}) and (\ref{eq:lemman9-2}) cannot be simultaneously
true with respect to edges $(u,v)$, $(v,x)$ and $(z,w)$ in
$T$.  Therefore, $q = r$, which proves the desired result.
\hfill{} $\qed$
\end{proof}

Lemmas \ref{lem:8} and \ref{lem:9} show that $T$ and $T'$ are isomorphic with respect to their internal edges.  It remains to show a one-to-one correspondence between their leaf sets. For this, we match up the leaves attached at every pendant
node in $T$ and $T'$. We start with pendant nodes that have
only one internal edge attached to them. For example, consider an internal
edge $(u,v)$ in $T$ such that $v$ is a pendant node and $T_{v}^{u v}$
has only leaves. Let $(u',v') = \phi(u,v)$ be the corresponding edge
in $T'$ such that $M_{u}^{uv} = M_{u'}^{u v}$. By Lemma \ref{lem:8}, $C^{uv}=C^{u' v'}$. Moreover, neither $T$ nor $T'$
have prunable leaves.
Thus, the same set of leaves must be attached at $v$ and $v'$
respectively. In subsequent steps, we select an internal edge
$(u,v)$ in $T$ such that $v$ is a pendant node and all the
other pendant nodes in $T_{v}^{u v}$ have already been matched up
in previous iterations. Again, let $(u',v')= \phi(u,v)$ such that $M_{u}^{uv} = M_{u'}^{u v}$. Using similar arguments, the same set of
leaves must be attached at $v$ and $v'$ respectively. Proceeding
this way, each pendant node in $T$ can be paired with the
corresponding pendant node in $T'$, and be shown to have the
same set of leaves attached to them. This shows that $T$ and $T'$
are isomorphic, as claimed. 
\hfill{} $\qed$

\paragraph{Proof of Lemma~\ref{lm:prune_n_contract}.}
Refer to Fig.~\ref{Flo:ProofLemma4}.

\begin{figure}
\begin{centering}
\vspace{-10pt}
\par\end{centering}

\begin{centering}
\includegraphics[width=0.8\linewidth]{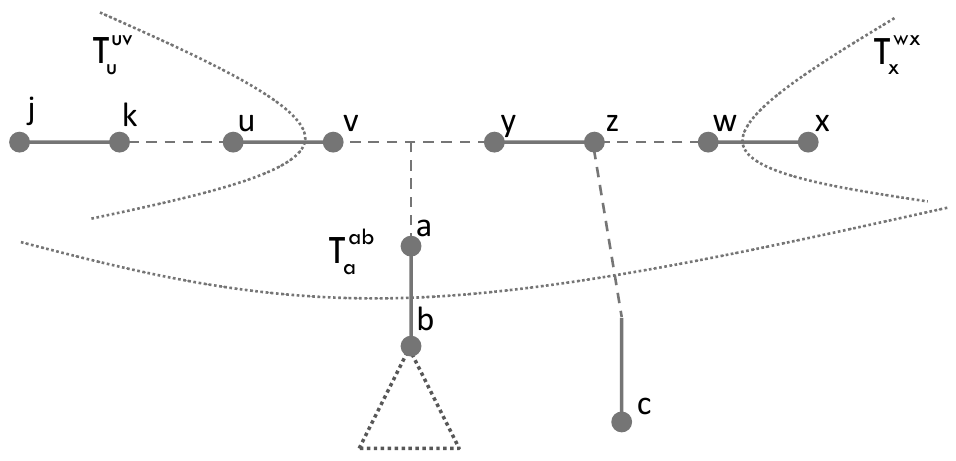}
\par\end{centering}

\vspace{-5pt}\caption{}

\vspace{-10pt}

\label{Flo:ProofLemma4}%
\end{figure}

\begin{enumerate}[(i)]
\item Consider any edge $(a,b)$ in a subtree branching out of $P_{u,x}$, as shown. We claim that  $M_{a}^{ab} \cup C^{ab} = M$; i.e., all the labels in $M$ appear in $T_{a}^{ab}$. This means that $M_{b}^{ab}=\emptyset$, so $(a,b)$ is uninformative.  

To prove the claim, observe first that, by definition, 
$M_{u}^{uv}\cup C^{uv}\cup M_{v}^{uv}=M.$
By Lemma~\ref{lem:3}, since $\Delta(u,v) \subseteq \Delta(w,x)$, we have $M_{x}^{wx} = M_{v}^{uv}$, so 
\begin{equation}
\label{eq:uninf}
M_{u}^{uv}\cup C^{uv}\cup M_{x}^{wx}=M.
\end{equation}
Now, $M_{u}^{uv}\cup C^{uv}$ is the set of labels on the leaves of $T_{u}^{uv}$, while
every label in $M_{x}^{wx}$ appears in $T_{x}^{wx}$.  Hence, $T_{u}^{uv}$ and $T_{x}^{wx}$ jointly contain every label in $M$.  Since $T_{u}^{uv}$ and $T_{x}^{wx}$ are subtrees of $T_{a}^{ab}$, this completes the proof of the claim.

\item Suppose $\Delta(u,v)=\Delta(w,x)$. By an argument similar to the one used in the
proof of Lemma~\ref{lem:4}, we can show that any edge
$(y,z)$ on the path $P_{v,w}=(v \ldots w)$ (see Fig.~\ref{Flo:ProofLemma4}) satisfies $M_{v}^{uv}=M_{z}^{yz}=M_{x}^{wx}$
and $M_{u}^{uv}=M_{y}^{yz}=M_{w}^{wx}$. Consider a leaf $c$
as shown; let $\ell$ be its label. Then, $\ell$ appears in $T_{x}^{wx}$, for else $M_{y}^{yz}\neq M_{w}^{wx}$, a contradiction.
Similarly, $\ell$ appears in $T_{u}^{uv}$. 
Now, let $S$ be the
tree obtained after pruning leaf $c$. 

\begin{enumerate}
\item $\Info(T)\subseteq\Info(S)$: Suppose pruning $c$ removes a quartet from $\Info(T)$. If
such a quartet exists in $T$, it must be resolved by an edge
$(j,k)\in T_{u}^{uv}$ (say). But then $(j,k)$ still resolves the
same quartet in $S$ because $\ell \in M_{x}^{wx}$, and the labels in $T_x^{wx}$ are 
a subset of those in $T_{k}^{jk}$.  This is a contradiction.
\item $\Info(S)\subseteq\Info(T)$: Suppose pruning $c$ adds a quartet to $\Info(S)$ that is not in $\Info(T)$. Such a quartet in $S$ must be resolved by an edge $(j,k)$ in $S_{u}^{uv}$
(say), that before pruning satisfied $\ell \in C^{jk}$, but now has $\ell \notin M_{k}^{jk}$. However $\ell \in M_{x}^{wx}$; therefore we still have $\ell \in C^{jk}$ and the edge still
cannot resolve the quartet, a contradiction.
\end{enumerate}
Hence, $c$ is prunable.\hfill{} $\qed$
\end{enumerate}

\paragraph{Proof of Lemma~\ref{lm:prunable}.}
Refer to Fig.~\ref{Flo:redundantCase}. Consider any pendant node $v$ of degree at least three in $T'$ attached to a leaf labeled $\ell$. Clearly deleting the leaf  does not change the information content of any edge in $T_{u}$ or $T_{y}$.
Now consider an edge $(w,x)$ in $T'$ as shown. Note that $\ell \in C^{wx}$, so $\ell$ does not contribute to $\Delta(w,x)$.  After deleting the leaf, we still have $\ell \in C^{wx}$, so $\Delta(w,x)$ remains unchanged.  Therefore, the leaf is prunable. \hfill{} $\qed$
\begin{figure}
\vspace{-10pt}

\begin{centering}
\includegraphics[width=0.8\columnwidth]{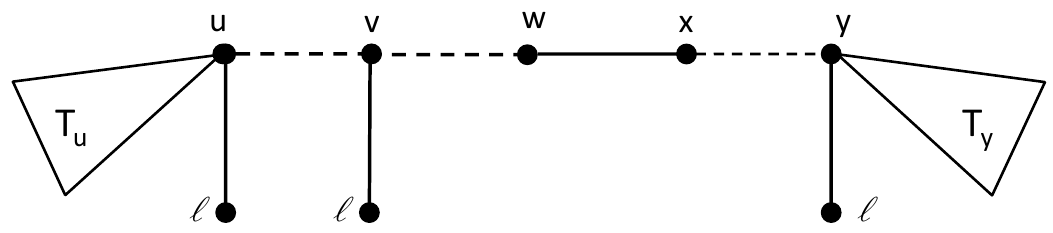}
\par\end{centering}

\vspace{-0pt}

\caption{Illustration for the proof of Lemma~\ref{lm:prunable}.  The leaves attached to pendant
nodes $u$, $v$, and $y$ are labeled by $\ell$, and the subtrees indicated
by $T_{u}$ and $T_{y}$ do not contain a leaf labeled with $\ell$.  Nodes $u$ and $y$ have degree two in $T'$, while $v$ has degree three.}

\vspace{-15pt}

\label{Flo:redundantCase}%
\end{figure}

\end{document}